\newcommand{\MHVfull}{\textsc{Maximum Happy Vertices}}
\newcommand{\MHE}{\textsc{MHE}}
\newcommand{\MHV}{\textsc{MHV}}
\newcommand{\MHEfull}{\textsc{Maximum Happy Edges}}
\newtheorem{rrule}{Reduction rule}
\newif\ifshort
	\newenvironment{lemmaO}[1][]{
		\if\relax\detokenize{#1}\relax
		\expandafter\@firstoftwo
		\else
		\expandafter\@secondoftwo
		\fi
		{\begin{lemma}[$\star$]}{\begin{lemma}[#1, $\star$]}
	}
	{
		\end{lemma}
	}
\newenvironment{theoremO}[1][]{
	\if\relax\detokenize{#1}\relax
	\expandafter\@firstoftwo
	\else
	\expandafter\@secondoftwo
	\fi
	{\begin{theorem}[$\star$]}{\begin{theorem}[#1, $\star$]}
		}
		{
		\end{theorem}
	}
\newenvironment{corollaryO}[1][]{
	\if\relax\detokenize{#1}\relax
	\expandafter\@firstoftwo
	\else
	\expandafter\@secondoftwo
	\fi
	{\begin{corollary}[$\star$]}{\begin{corollary}[#1, $\star$]}
		}
		{
		\end{corollary}
	}
	\newenvironment{proofO}[1][]{\begin{proof}}{\end{proof}}
\newcommand{\problemtitle}[1]{\gdef\@problemtitle{#1}}\newcommand{\probleminput}[1]{\gdef\@probleminput{#1}}\newcommand{\problemquestion}[1]{\gdef\@problemquestion{#1}}\NewEnviron{problemx}{
\@problemquestion \end{tabularx}}
	\par\addvspace{.5\baselineskip}
\renewcommand{\O}{\mathcal{O}}
\newcommand{\Ostar}[1]{\O^*(#1)}
\newtheorem{theorem}{Theorem}
\newtheorem{lemma}{Lemma}
\newtheorem{corollary}{Corollary}
\newtheorem{definition}{Definition}
\newtheorem{claim}{Claim}
\theoremstyle{nonumberplain}
\newtheorem{proof}{Proof}
\newtheoremstyle{nonumberplainnobrackets}{\item[\theorem@headerfont\hskip\labelsep ##1\theorem@separator]}{\item[\theorem@headerfont\hskip \labelsep ##1\ ##3\theorem@separator]}
\theoremstyle{nonumberplainnobrackets}
\newtheorem{claimproof}{Proof of Claim \theclaim}
\title{Lower Bounds for the Happy Coloring Problems\thanks{This research was supported by the Russian Science Foundation (project 16-11-10123)}}
\title{Lower Bounds for the Happy Coloring Problems\thanks{This research was supported by the Russian Science Foundation (project 16-11-10123)}}
\author{Ivan Bliznets\inst{1,2} \and
	Danil Sagunov\inst{1}}
\authorrunning{I. Bliznets and D. Sagunov}
\institute{St.\ Petersburg Department of Steklov Institute of Mathematics of the Russian Academy
	of Sciences, Saint Petersburg, Russia\\\email{iabliznets@gmail.com}, \email{danilka.pro@gmail.com} \and National Research University Higher School of Economics, Saint Petersburg, Russia}
\begin{document}
	\maketitle

\begin{abstract}
In this paper, we study the \textsc{Maximum Happy Vertices} and the \textsc{Maximum Happy Edges} problems (MHV and MHE for short).
Very recently, the problems attracted a lot of attention and were studied in Agrawal '17, Aravind et al.\ '16, Choudhari and Reddy '18, Misra and Reddy '17.
Main focus of our work is lower bounds on the computational complexity of these problems.
Established lower bounds can be divided into the following groups: \NP-hardness of the above guarantee parameterization, kernelization lower bounds (answering questions of Misra and Reddy '17), exponential lower bounds under the \textsc{Set Cover Conjecture} and the \textsc{Exponential Time Hypothesis}, and inapproximability results.
Moreover, we present an $\O^*(\ell^k)$ randomized algorithm for MHV and an $\O^*(2^k)$ algorithm for MHE, where $\ell$ is the number of colors used and $k$ is the number of required happy vertices or edges.
These algorithms cannot be improved to subexponential taking proved lower bounds into account.

\end{abstract}

\section{Introduction}

In this paper, we study \textsc{Maximum Happy Vertices} and \textsc{Maximum Happy Edges}. The problems are motivated by a study of algorithmic aspects of homophyly law in large networks and  were  introduced by Zhang and Li in 2015~\cite{zhang2015algorithmic}.
The law states that in social networks people are more likely to connect with people they like.
Social network is represented by a graph, where each vertex corresponds to a person of the network, and an edge between two vertices denotes that the corresponding persons are connected within the network.
Furthermore, we let vertices have a color assigned.
The color of a vertex indicates type, character or affiliation of the corresponding person in the network.
An edge is called \emph{happy} if its endpoints are colored with the same color.
A vertex is called \emph{happy} if all its neighbours are colored with the same color as the vertex itself.
Equivalently, a vertex is happy if all edges incident to it are happy.
Formal definition of \textsc{Maximum Happy Vertices} and \textsc{Maximum Happy Edges} is the following.

\begin{problemx}
	\problemtitle{\MHVfull~(\MHV)}
	\probleminput{A graph $G$, a partial coloring of vertices $p: S \rightarrow [\ell]$ for some $S\subseteq V(G)$ and an integer $k$.}
	\problemquestion{Is there a coloring $c: V(G) \rightarrow [\ell]$ extending partial coloring $p$ such that the number of happy vertices with respect to $c$ is at least $k$?}
\end{problemx}

\begin{problemx}
	\problemtitle{\MHEfull~(\MHE)}
	\probleminput{A graph $G$, a partial coloring of vertices $p: S \rightarrow [\ell]$ for some $S\subseteq V(G)$ and an integer $k$.}
	\problemquestion{Is there a coloring $c: V(G) \rightarrow [\ell]$ extending partial coloring $p$ such that the number of happy edges with respect to $c$ is at least $k$?}
\end{problemx}

Recently, \MHV~and \MHE~have attracted a lot of attention and were studied from parameterized~\cite{Agrawal2018,Aravind2016,Aravind2017,Choudhari2018,Misra2018} and  approximation~\cite{zhang2015algorithmic,zhang2018improved, zhang2015improved,xu2016submodular} points of view as well as from experimental perspective~\cite{lewis2019finding}.

NP-hardness of \MHV and MHE was proved by Zhang and Li even in case when only three colors are used. Later,  Misra and Reddy \cite{Misra2018} proved \NP-hardness of both \MHV~and \MHE~on split and on bipartite graphs. However, \MHV~is polynomially time solvable on cographs and trees~\cite{Misra2018,Aravind2016}. Approximation results for \MHV~are presented in Zhang et al.~\cite{zhang2018improved}. They showed that \MHV~can be approximated within $\frac{1}{\Delta + 1}$, where $\Delta$ is the maximum degree of the input graph, and \MHE~can be approximated within $\frac{1}{2}+\frac{\sqrt{2}}{4}f(\ell)$, where $f(\ell)= \frac{(1-1/\ell)\sqrt{\ell(\ell-1)}+1/\sqrt{2}}{\ell-1+1/2\ell}$. From parameterized point of view the following parameters were studied: pathwidth~\cite{Agrawal2018,Aravind2017}, treewidth~\cite{Agrawal2018,Aravind2017}, neighbourhood diversity~\cite{Aravind2017}, vertex cover~\cite{Misra2018}, distance to clique~\cite{Misra2018}, distance to threshold graphs~\cite{Choudhari2018}. Kernelization questions were studied in works \cite{Agrawal2018,gao2018kernelization}.
Agrawal~\cite{Agrawal2018} provided a $\O(k^2\ell^2)$ kernel for MHV where $\ell$ is the number of colors used and $k$ is the number of desired happy vertices.
Independently, Gao and Gao~\cite{gao2018kernelization} present $2^{k \ell + k} + k\ell +k+\ell$ kernel for general case and $7(k\ell+k)+\ell-10$ in case of planar graphs.

Short summary of our results can be found below.

\begin{description}
	\item [\textbf{No polynomial kernels:}] If $\NP \not \subseteq \coNP/\poly$ then there are no polynomial kernels for MHV parameterized by vertex cover, and no polynomial kernels for MHE under the following parametrizations: number of uncolored vertices, number of happy edges, and distance to almost any reasonable graph class.  Moreover, under $\NP \not\subseteq \coNP/\poly$, there is no $\O( (k^d \ell)^{2-\epsilon} )$ and no $\O( (k^dh)^{2-\epsilon} )$  bitsize kernel for \MHV. Note that these results answer question from~\cite{Misra2018}:
	"Do the \MHVfull~and \MHEfull~problems admit polynomial kernels when parameterized by either the vertex
	cover or the distance to clique parameters?"
	\item [\textbf{Above guarantee:}] Above-greedy versions of \MHV~and \MHE~are NP-complete even for budget equal $1$.
	\item [\textbf{Exponential lower bounds:}] Assuming the Set Cover Conjecture, \MHV~and \MHE~do not admit $\O^*((2-\epsilon)^{n'})$ algorithms, where $n'$ is the number of uncolored vertices in the input graph. Even with $\ell=3$, there is no $2^{o(n+m)}$ algorithm for \MHV~and \MHE, unless ETH fails. 
	\item [\textbf{Innaproximability:}] Unless $\P= \NP$, \MHV~does not admit approximation algorithm with factors $\O(n^{\frac{1}{2}-\epsilon})$,  $\O(m^{\frac{1}{2}-\epsilon})$,  $\O(h^{1-\epsilon})$,  $\O({\ell}^{1-\epsilon})$, for any $\epsilon > 0$.
	
	\item [\textbf{Algorithms:}] We present $\O^*(\ell^k)$ randomized algorithm for MHV and $\O^*(2^k)$ algorithm for MHE. Running time of this algorithms match with the corresponding lower bounds.
	We should note that an algorithm with the running time of $\O^*(2^k)$ for \MHE~was also presented by Aravind et al.\ in \cite{Aravind2017}.
\end{description}

\section{Preliminaries}

\textbf{Basic notation.} We denote the set of positive integer numbers by $\mathbb{N}$.
For each positive integer $k$, by $[k]$ we denote the set of all positive integers not exceeding $k$, $\{1,2,\ldots, k\}$.
We use $\sqcup$ for the disjoint union operator, i.e.\ $A\sqcup B$ equals $A\cup B$, with an additional constraint that $A$ and $B$ are disjoint.

We use traditional $\O$-notation for asymptotical upper bounds.
We additionally use $\mathcal{O}^*$-notation that hides polynomial factors.
Many of our results concern the parameterized complexity of the problems, including fixed-parameter tractable algorithms, kernelization algorithms, and some hardness results for certain parameters.
For detailed survey in parameterized algorithms we refer to the book of Cygan et al.\ \cite{cygan2015parameterized}.

Throughout the paper, we use standard graph notation and terminology, following the book of Diestel \cite{diestel2018graph}.
All graphs in our work are undirected simple graphs.
We may refer to the distance to $\mathcal{G}$ parameter, where $\mathcal{G}$ is an arbitrary graph class.
For a graph $G$, we say that a vertex subset $S\subseteq V(G)$ is a \emph{$\mathcal{G}$ modulator} of $G$, if $G$ becomes a member of $\mathcal{G}$ after deletion of $S$, i.e.\ $G\setminus S \in \mathcal{G}$.
Then, the \emph{distance to $\mathcal{G}$} parameter of $G$ is defined as the size of its smallest $\mathcal{G}$ modulator.

\textbf{Graph colorings.} When dealing with instances of \MHVfull~or \MHEfull, we use a notion of colorings.
A \textit{coloring} of a graph $G$ is a function that maps vertices of the graph to the set of colors.
If this function is partial, we call such coloring \emph{partial}.
If not stated otherwise, we use $\ell$ for the number of distinct colors, and assume that colors are integers in $[\ell]$.
A partial coloring $p$ is always given as a part of the input for both problems, along with graph $G$.
We also call $p$ a \emph{precoloring} of the graph $G$, and use $(G,p)$ to denote the graph along with the precoloring.
The goal of both problems is to extend this partial coloring to a specific coloring $c$ that maps each vertex to a color.
We call $c$ a \emph{full coloring} (or simply, a coloring) of $G$ that extends $p$.
We may also say that $c$ is a coloring of $(G,p)$.
For convenience, introduce the notion of potentially happy vertices, both for full and partial colorings.

\begin{definition}
	We call a vertex $v$ of $(G,p)$ \textit{potentially happy}, if there exists a coloring $c$ of $(G,p)$ such that $v$ is happy with respect to $c$. In other words, if $u$ and $w$ are precolored neighbours of $v$, then $p(u)=p(w)$. We denote the set of all potentially happy vertices in $(G,p)$ by $\mathcal{H}(G,p)$.
	
	By $\mathcal{H}_i(G,p)$ we denote the set of all potentially happy vertices in $(G,p)$ such that they are either precolored with color $i$ or have a neighbour precolored with color $i$: $$\mathcal{H}_i(G,p)=\{v \in \mathcal{H}(G,p) \mid N[v] \cap p^{-1}(i)\neq \emptyset\}.$$
	In other words, if a vertex $v \in \mathcal{H}_i(G,p)$ is happy with respect to some coloring $c$ of $(G,p)$, then necessarily $c(v)=i$.
\end{definition}

For a graph with precoloring $(G,p)$, by $h=|\mathcal{H}(G,p)|$ we denote the number of potentially happy vertices in $(G,p)$.
Note that if $c$ is a full coloring of a graph $G$, then $|\mathcal{H}(G,c)|$ is equal to the number of vertices in $G$ that are happy with respect to $c$.

\ifshort
Due to lack of space, we omit proofs of some theorems and lemmata.
We mark such theorems and lemmata with the `$\star$' sign.
Missing proofs can be found in the full version of the paper.
\fi
\section{Polynomial kernels for structural graph parameters}\label{sec:structural-kernel}

In this section, we study existence of polynomial kernels for \MHV~or \MHE~under several parameterizations.
We start with proving lower bounds for structural graph parameters.
We provide reductions to both \MHV~and \MHE~from the following problem.

\begin{problemx}
	\problemtitle{\textsc{Bounded Rank Disjoint Sets} \cite{Dom2014}}
	\probleminput{A set family $\mathcal{F}$ over a universe $U$ with every set $S \in \mathcal{F}$ having size at most $d$,
		and a positive integer $k$.}
	\problemquestion{Is there a subfamily $\mathcal{F}'$ of $\mathcal{F}$ of size at least $k$ such that every pair of sets
		$S_1$, $S_2$ in $\mathcal{F}'$ we have $S_1 \cap S_2 = \emptyset$?}
\end{problemx}

\begin{theorem}[\cite{Dom2014}]\label{thm:disj_sets_no_polycomp}
	\textsc{Bounded Rank Disjoint Sets} parameterized by $kd$ does not admit a polynomial compression even if every set $S \in \mathcal{F}$ consists of exactly $d$ elements and $|U|=kd$, unless $\NP \subseteq \coNP/\poly$.
\end{theorem}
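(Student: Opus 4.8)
The plan is to rule out a polynomial compression by exhibiting an OR-cross-composition from an NP-hard language into \textsc{Bounded Rank Disjoint Sets} parameterized by $kd$; by the cross-composition framework of Bodlaender, Jansen and Kratsch this implies the stated lower bound unless $\NP \subseteq \coNP/\poly$. As the source I would take the (NP-hard) decision version of exact cover by $d$-element sets, i.e.\ the question of whether a universe $U$ with $|U| = kd$ can be partitioned into $k$ sets drawn from a family $\mathcal{F}$ of $d$-element subsets (for $d=3$ this is exactly \textsc{Exact Cover by 3-Sets}). Using a polynomial equivalence relation together with padding by trivial no-instances, I may assume that the $t$ input instances all share the same values of $k$, $d$ and $|U| = kd$, and that $t = 2^s$ is a power of two, so each instance is labelled by an $s$-bit identifier with $s = \lceil \log t \rceil$.

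The core idea is the colors-and-IDs technique: the composed universe must have size $\poly(kd + \log t)$, so I cannot take the disjoint union of the $t$ universes (that grows linearly in $t$ and destroys the parameter bound). Instead I identify the $kd$ \emph{content} elements of all instances, relabelling them onto a single shared copy $U^\star$ of size $kd$; a set $S$ of instance $I_j$ then becomes a $d$-subset of $U^\star$. To prevent a solution from mixing sets coming from different instances, I augment each set with $\O(s)$ fresh \emph{ID elements} encoding the bits of $j$, and add complementary filler sets and elements whose sole purpose is to force every selected set to agree on all $s$ ID bits. A packing that covers the shared content universe together with the ID gadget then exists if and only if the $k$ selected content-sets all carry one common identifier $j$, which happens precisely when instance $I_j$ admits a perfect packing. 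Hence the composed instance is a yes-instance iff some $I_j$ is, and its universe has size $kd + \O(s) = \poly(kd + \log t)$, as the framework requires.

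The delicate part is respecting the rigidity of the exact-cover formulation throughout: in the composed instance every set must have exactly $d'$ elements and the universe must have size exactly $k'd'$ for uniform parameters $d', k'$. I would handle this by padding each augmented set to a common size with private dummy elements and by introducing a controlled number of filler sets, so that the ID gadget can always be completed to an exact partition exactly when the content part is consistent. If enforcing exactness directly proves cumbersome, an alternative is to first establish the no-polynomial-compression bound for the relaxed variant with sets of size \emph{at most} $d$ (where the composition is cleaner) and then transfer it to the exact variant via a polynomial-parameter transformation that pads every set to size exactly $d$ with fresh elements and adds filler sets to reach $|U| = kd$; since compression lower bounds are preserved under polynomial-parameter transformations, the bound carries over. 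I expect the forcing of ID-consistency together with the exactness bookkeeping to be the main obstacle, while the remaining verifications are routine.
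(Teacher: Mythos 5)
First, a framing point: the paper does not prove this statement at all --- it is imported verbatim from the cited work of Dom, Lokshtanov and Saurabh, so there is no in-paper proof to compare against. Your proposal does correctly identify the technique of that source (the cited paper is literally titled ``Kernelization lower bounds through colors and IDs'') and the right framework (an OR-cross-composition, a polynomial equivalence relation, padding to a power of two), so at the level of strategy you are on target.

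There is nevertheless a genuine gap, and it sits exactly where you defer the details. \textsc{Bounded Rank Disjoint Sets} is a \emph{packing} problem: the selected sets must be pairwise disjoint. If, as you describe, every set originating from instance $I_j$ is augmented with $\O(s)$ ID elements encoding the bits of $j$, then any two sets from the same instance share all of their ID elements and can never both appear in a solution; hence no instance with $k \ge 2$ could ever make the composed instance a yes-instance, and the equivalence condition of the composition fails. The obvious repairs also fail: giving each set its own private ID elements makes the universe grow linearly in $t$, destroying the bound $\poly(kd+\log t)$ on the parameter, while duplicating each ID element $k$ times lets sets from \emph{different} instances coexist by grabbing different copies, so cross-instance mixing is no longer blocked. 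Designing a selector/ID gadget that simultaneously (i) admits $k$ pairwise disjoint sets from a single instance, (ii) forbids mixing sets of two instances, and (iii) keeps the universe of size $\poly(kd+\log t)$ while preserving the exact-cover structure ($|U|=kd$, all sets of size exactly $d$) is precisely the nontrivial content of the theorem; your sketch names this as ``the main obstacle'' without indicating how to overcome it. The proposed fallback through the ``sets of size at most $d$'' variant does not help, since the disjointness difficulty is identical there. As written, the proposal is a plan to apply the right technique rather than a proof.
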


The following two theorems answer open questions posed in \cite{Misra2018}.

\begin{theorem}\label{thm:mhv_nopolycomp}
	\MHVfull~parameterized by the vertex cover number does not admit a polynomial compression, unless $\NP \subseteq \coNP/\poly$.
\end{theorem}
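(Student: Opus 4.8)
The plan is to give a polynomial-parameter transformation (a polynomial compression-preserving reduction) from \textsc{Bounded Rank Disjoint Sets} to \MHV, controlling the vertex cover number of the constructed instance by a polynomial in the source parameter $kd$. By Theorem~\ref{thm:disj_sets_no_polycomp}, \textsc{Bounded Rank Disjoint Sets} admits no polynomial compression when parameterized by $kd$ (even with every set of size exactly $d$ and $|U|=kd$), so producing an \MHV{} instance whose vertex cover is bounded by $\mathrm{poly}(kd)$ and whose answer coincides with the source answer immediately transfers the lower bound. The key design principle is that the elements of the universe $U$ should form the vertex cover (there are only $|U|=kd$ of them), while the sets of $\mathcal{F}$ become vertices outside the cover whose potential happiness encodes set membership.

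**First I would** set up the construction. Let the universe elements $u\in U$ become vertices; since $|U|=kd$, placing all of them into the modulator already gives a vertex cover of size $kd$ (plus a bounded number of auxiliary gadget vertices). For each set $S\in\mathcal{F}$ I introduce a vertex $v_S$ adjacent exactly to the element-vertices it contains, so $v_S$ has degree $d$. The idea is that $v_S$ is happy precisely when all $d$ of its neighbouring element-vertices receive $v_S$'s color, and the disjointness constraint should force that two set-vertices can be simultaneously happy only when their sets share no element. To implement this I would give each set-vertex $v_S$ its own private color and precolor a pendant (or a precolored neighbour) so that $v_S$, if happy, must be colored with that private color; then an element-vertex $u$ shared by $S_1,S_2$ could not simultaneously match the distinct private colors of $v_{S_1}$ and $v_{S_2}$, so at most one of two intersecting set-vertices can be happy. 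Setting the happiness target to $k$ then makes ``$k$ happy set-vertices'' equivalent to ``$k$ pairwise-disjoint sets.''

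**The main subtlety** — and the step I expect to be the obstacle — is ensuring that the only way to collect $k$ happy vertices is through $k$ pairwise disjoint \emph{set}-vertices, i.e.\ that the reduction does not accidentally create cheap happy vertices elsewhere (for instance happy element-vertices, or happy gadget vertices) that let the \MHV{} instance say ``yes'' when \textsc{BRDS} says ``no.'' I would handle this by making element-vertices non-potentially-happy or uncountable toward the target: e.g.\ attach to each element-vertex two pendant neighbours precolored with two distinct colors, so by the definition of $\mathcal{H}(G,p)$ no element-vertex is potentially happy, and similarly sanitize any auxiliary vertices. One must also verify that each color introduced is used consistently and that giving every set its own private color does not blow up the vertex cover — it does not, because colors are not vertices, and the only precolored pendants needed are a bounded number attached to the $kd$ element-vertices, keeping the cover at $\mathrm{poly}(kd)$.

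**Finally I would** prove correctness in both directions: given a disjoint subfamily $\mathcal{F}'$ of size $k$, color each element in $\bigcup_{S\in\mathcal{F}'}S$ by the private color of the (unique, by disjointness) set containing it, making all $k$ corresponding set-vertices happy; conversely, any coloring with $k$ happy vertices yields $k$ happy set-vertices (since nothing else can be happy), whose sets must be pairwise disjoint because a shared element would be forced to take two different private colors. Since the vertex cover number is $\mathrm{poly}(kd)$ and the instance is polynomial-sized, a polynomial compression for \MHV{} parameterized by vertex cover would yield one for \textsc{BRDS} parameterized by $kd$, contradicting Theorem~\ref{thm:disj_sets_no_polycomp} unless $\NP\subseteq\coNP/\poly$.
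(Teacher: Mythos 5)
Your proposal follows essentially the same route as the paper: reduce from \textsc{Set Packing} (with \textsc{Bounded Rank Disjoint Sets} as the hard special case), let the $kd$ uncolored element-vertices form the vertex cover, give each set a vertex with a private color whose happiness encodes selecting that set, and block the element-vertices from ever being happy so that the happiness target $k$ counts only set-vertices. The one place where your construction, read literally, would fail is the forcing gadget for set-vertices: if $v_S$ is left uncolored and forced via a precolored pendant, then each pendant is a degree-one vertex that becomes happy as soon as $v_S$ receives its private color, so all $m$ pendants can be made happy simultaneously with no disjointness at all; and sanitizing the pendants by giving each a second, differently-precolored neighbour would leave $m$ vertex-disjoint pendant--$v_S$ edges to cover, pushing the vertex cover up to $m$ rather than $\mathrm{poly}(kd)$. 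The fix is simply to precolor each set-vertex $s_j$ with its private color $j$ directly --- no pendant at all --- which is exactly what the paper does; then the set-vertices are the only potentially happy vertices and your correctness argument goes through verbatim. The paper also blocks element-vertices with two shared precolored vertices $t_1,t_2$ adjacent to every element-vertex (giving cover size $kd+1$) rather than with per-element pendants, but that choice is immaterial.
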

\begin{proof}
	We give a polynomial reduction from the \textsc{Set Packing} problem, such that the vertex cover number of the constructed instance of \MHV~is at most the size of the universe of the initial instance of \textsc{Set Packing} plus one. Since \textsc{Bounded Rank Disjoint Sets} is a special case of \textsc{Set Packing}, from Theorem \ref{thm:disj_sets_no_polycomp} the theorem statement will then follow. The reduction is as follows.
	
	Given an instance $(U=[n], \mathcal{F}=\{S_1, S_2, \ldots, S_m\}, k)$ of \textsc{Set Packing}, construct an instance $(G, p, k)$ of \textsc{MHV}. For each $i \in U$, introduce vertex $u_i$ in $G$ and left it uncolored. For each set $S_j \in \mathcal{F}$, introduce a vertex $s_j$ in $G$ and precolor it with color $j$, i.e.\ $p(s_j)=j$. Thus, the set of colors used in precoloring $p$ is exactly $[m]$. Then, for each $i \in [n]$ and $j \in [m]$ such that $i \in S_j$, introduce an edge between $u_i$ and $s_j$ in $G$. Additionally, introduce two vertices $t_1$ and $t_2$ to $G$ and precolor them with colors $1$ and $2$ respectively. Then, introduce an edge $(t_1, t_2)$ to $G$ and for every $i \in [n]$ and $j \in [2]$, introduce an edge $(u_i,t_j)$ in $G$. Thus, vertices $t_1$ and $t_2$ never become happy and ensure that $u_i$ never become happy for any $i \in [n]$. Finally, set the number of required happy vertices to $k$. Observe that $\{u_1, \ldots, u_n\} \cup \{t_1\}$ forms a vertex cover of $G$, hence the vertex cover number of $G$ is at most $n+1$.
	
	We now claim that $(U, \mathcal{F}, k)$ is a yes-instance of \textsc{Set Packing} if and only if $(G, p, k)$ is a yes-instance of \MHV. Let $S_{i_1}, S_{i_2}, \ldots, S_{i_k}$ be the answer to $(U, \mathcal{F}, k)$, i.e.\ $S_{i_p}\cap S_{i_q}=\emptyset$ for every distinct $p, q \in [k]$. Since $S_{i_1}, S_{i_2}, \ldots, S_{i_k}$ are disjoint, $s_{i_1}, s_{i_2}, \ldots, s_{i_k}$ do not have any common neighbours in $G$. Hence, we can extend coloring $p$ to coloring $c$ in a way that $s_{i_1}, s_{i_2}, \ldots, s_{i_k}$ are happy with respect to $c$ ($c(u_i)$ is then, in fact, the index of the set containing $u_i$, i.e.\ $u_i \in S_{c(u_i)}$). At least $k$ vertices become happy in $G$, hence $(G, p, k)$ is a yes-instance of \MHV.
	
	In the other direction, let $c$ be a coloring of $G$ extending $p$ so that at least $k$ vertices in $G$ are happy with respect to $c$. Only vertices that can be happy in $(G,p)$ are vertices of type $s_i$, hence there are vertices $s_{i_1}, s_{i_2}, s_{i_3}, \ldots, s_{i_k}$ that are happy in $G$ with respect to $c$. Since these vertices are precolored with pairwise distinct colors and are simultaneously happy, they may have no common neighbours in $G$. This implies that the corresponding sets of the initial instance $S_{i_1}, S_{i_2}, \ldots, S_{i_k}$ are pairwise disjoint. Hence, they form an answer to the initial instance $(U, \mathcal{F}, k)$ of \textsc{Set Packing}. This completes the proof.
\end{proof}

\begin{theorem}\label{thm:mhe_nopolycomp}
	\MHEfull~parameterized by the number of uncolored vertices or by the number of happy edges does not admit a polynomial compression, unless $\NP \subseteq \coNP/\poly$.
\end{theorem}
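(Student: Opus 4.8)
The plan is to mirror the proof of Theorem~\ref{thm:mhv_nopolycomp}: I would give a polynomial reduction from \textsc{Bounded Rank Disjoint Sets} (equivalently, the exact-size version of \textsc{Set Packing}) to \MHE, arranging that the relevant parameter of the constructed instance---the number of uncolored vertices in one case, the required number of happy edges $k$ in the other---is bounded by a polynomial in $kd$, the parameter for which Theorem~\ref{thm:disj_sets_no_polycomp} rules out a polynomial compression. Since such a polynomial-parameter reduction transports compression lower bounds, a polynomial compression for \MHE~under either parameter would yield one for \textsc{Bounded Rank Disjoint Sets}, contradicting Theorem~\ref{thm:disj_sets_no_polycomp} unless $\NP \subseteq \coNP/\poly$. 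Throughout I assume, as permitted by Theorem~\ref{thm:disj_sets_no_polycomp}, that every $S_j \in \mathcal{F}$ has exactly $d$ elements and that $|U| = kd$.

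The essential new ingredient, compared with the \MHV~reduction, is that happy edges count additively, so I must design an \emph{all-or-nothing} set gadget. For each element $i \in U$ I would create an uncolored vertex $u_i$, and for each set $S_j$ a gadget, built mostly from precolored vertices, that becomes ``activated'' and contributes a fixed reward of $R$ happy edges precisely when all $u_i$ with $i \in S_j$ receive the color dedicated to $S_j$, and contributes strictly fewer happy edges otherwise. The reward edges would be realized by attaching pendant vertices precolored with the set's dedicated color to a designated vertex whose incident reward edges become happy only under this fully consistent coloring. Because each $u_i$ can take only one color, two sets sharing an element can never be simultaneously activated; hence the activated sets are pairwise disjoint, and I would set the target number of happy edges to $k\cdot R$, accounting exactly for any unavoidable background of happy edges.

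For correctness I would argue both directions as in Theorem~\ref{thm:mhv_nopolycomp}. Given $k$ pairwise disjoint sets, coloring each $u_i$ by the index of the unique selected set containing it (and coloring the remaining vertices canonically) activates all $k$ selected gadgets and yields at least $k\cdot R$ happy edges. Conversely, from a coloring with at least $k\cdot R$ happy edges I would show that at least $k$ gadgets must be fully activated---this is where the all-or-nothing property does the work, since partially consistent gadgets fall short of the reward---and that activated gadgets, sharing no element, correspond to a subfamily of $k$ pairwise disjoint sets. For the happy-edges parameter it then suffices to take $R$ polynomial in $d$ (for instance $R=d$), so that the target $k\cdot R$ is polynomial in $kd$; for the uncolored-vertices parameter the only uncolored vertices should be the $|U|=kd$ element vertices, which is already linear in the parameter.

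The main obstacle is designing the gadget so that the all-or-nothing property genuinely holds against \emph{every} coloring while respecting the parameter budget. Two degeneracies must be excluded: a coloring that activates a gadget only partially yet still harvests too many of its reward edges, and global colorings (such as nearly monochromatic ones) that accumulate spurious happy edges outside the intended reward structure and thereby exceed $k\cdot R$ without encoding a packing. Controlling the second degeneracy typically calls for ``blocker'' vertices precolored with conflicting colors, in the spirit of the vertices $t_1,t_2$ in the \MHV~reduction, to render such colorings strictly suboptimal. The tension is sharpest for the uncolored-vertices parameterization: since the number of sets may be exponential in $kd$, I cannot afford even one uncolored vertex per set, so all gadget interaction must be funneled through the $kd$ shared element vertices; arranging all-or-nothing behavior under this restriction, rather than through per-set free vertices, is the delicate part, whereas for the happy-edges parameter the corresponding difficulty is instead to keep the per-set reward $R$, and hence the target, polynomially bounded.
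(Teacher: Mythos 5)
Your overall frame is right---a polynomial-parameter reduction from the exact-size version of \textsc{Bounded Rank Disjoint Sets}, with the $kd$ element vertices as the only uncolored vertices and a target number of happy edges polynomial in $kd$---but the central device you propose, an all-or-nothing per-set reward gadget built from precolored vertices, cannot be realized as described, and this is precisely the step you leave unconstructed. Once every vertex outside $\{u_i\}_{i\in U}$ is precolored, the happiness of any edge depends on at most one free color choice: an edge between two precolored vertices is happy or not independently of the extension, and an edge incident to exactly one uncolored vertex $u_i$ is happy iff $c(u_i)$ equals a fixed color. Consequently the total number of happy edges is a constant plus $\sum_i f_i(c(u_i))$ plus pairwise terms $[c(u_i)=c(u_j)]$ coming from edges among the $u_i$ themselves. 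No ``designated vertex whose incident reward edges become happy only under the fully consistent coloring'' of all $d$ elements of $S_j$ can exist, because that would require an edge whose happiness is a degree-$d$ function of the free choices; pendant reward vertices precolored with the set's color attached to another precolored vertex yield edges whose happiness is already determined. So the two degeneracies you flag (partial harvesting and spurious global colorings) are not side issues to be patched with blockers---they are symptoms of the gadget being unimplementable under the uncolored-vertex budget.

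The paper resolves this with a different mechanism: no per-set all-or-nothing reward at all. Each $u_i$ is joined to $n$ copies of a precolored vertex for every set containing $i$, so that any ``legal'' choice $i\in S_{c(u_i)}$ contributes a fixed background of $n$ happy edges per element ($n^2$ in total), and illegal choices are strictly suboptimal by a local exchange argument. The discriminating count then comes from making $\{u_i\}$ a clique: legal colorings partition the $u_i$ into monochromatic groups of size at most $d$, contributing $\sum_t \binom{|V_t|}{2}$ happy clique edges, and by convexity this is at most $k\binom{d}{2}$ with equality iff all groups have size exactly $d$, i.e., iff the coloring encodes a partition of $U$ into $k$ pairwise disjoint sets of $\mathcal{F}$. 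The target $n^2+k\binom{d}{2}$ is polynomial in $kd$, handling both parameterizations at once. In short, the all-or-nothing behaviour you correctly identify as necessary is achieved in aggregate, through pairwise interactions among the uncolored vertices and a counting argument, rather than through any per-set gadget; your proposal is missing this idea and would not go through as written.
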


\begin{proof}
	As in the proof of Theorem \ref{thm:mhv_nopolycomp}, we again provide a polynomial reduction from \textsc{Bounded Rank Disjoint Sets} and then use Theorem \ref{thm:disj_sets_no_polycomp}.
	In this proof though, we will use the restricted version \textsc{Bounded Rank Disjoints Sets} problem itself (and not the \textsc{Set Packing} problem), formulated in Theorem \ref{thm:disj_sets_no_polycomp}.
	That is, we will use the constraint that all sets in the given instance are of the same size $d$, and the size of the universe $|U|$ is equal to $kd$.
	We note that the following reduction has very much in common with the reduction described in the proof of Theorem \ref{thm:mhv_nopolycomp}.
	
	Given an instance $([n], \mathcal{F}=\{S_1, S_2, \ldots, S_m\}, k)$ of \textsc{Bounded Rank Disjoint Sets} with $n=kd$ and $|S_i|=d$ for every $i\in [m]$, we construct an instance $(G, p, k')$ of \MHE. We assume that each element of the universe $[n]$ is contained in at least one set, otherwise the given instance is a no-instance. Firstly, as in the proof of Theorem \ref{thm:mhv_nopolycomp}, for each element of the universe $i \in [n]$, introduce a corresponding vertex $u_i$ in $G$.
	For each set $S_j$, $j \in [m]$, introduce not just one, but $n$ corresponding vertices $s_{j,1}, s_{j,2}, \ldots, s_{j,n}$.
	Then again, similarly to the proof of Theorem \ref{thm:disj_sets_no_polycomp}, for each $i, j$ such that $i \in S_j$, introduce edges between $u_i$ and \emph{each} vertex $s_{j, t}$ corresponding to the set $S_j$, i.e.\ $n$ edges in total.
	To finish the construction of $G$, introduce every possible edge $(u_i, u_j)$ in $G$.

	Thus, $V(G)=\{u_i \mid i \in [n]\} \cup \{s_{j,t} \mid j \in [m], t \in [n]\}$ and $E(G)=\{(u_i, s_{j,t}) \mid i \in S_j, t \in [n] \cup \{(u_i, u_j) \mid i, j \in [n], i \neq j\}$. Then, precolor the vertices of $G$ in the usual way, i.e.\ set $p(s_{j,t})=j$ for every $j \in [m]$ and $t \in [n]$, and leave each vertex $u_i$ uncolored. Finally, we set the number of required happy edges to $k'=n^2+k\binom{d}{2}=(kd)^2+k\binom{d}{2}$. Construction of $(G, p, k')$ is done in polynomial time. Observe that the number of uncolored vertices in $(G, p, k)$ equals the size of the universe $n$, and the number of required happy edges is polynomial of $n$. Hence, existence of a polynomial kernel respectively to any of these two parameters for \MHE~contradicts the statement of Theorem \ref{thm:disj_sets_no_polycomp}. We argue that the initial instance is a yes-instance if and only if $(G, p, k')$ is a yes-instance of \MHE.
	
	We prove first that if $([n], \mathcal{F}, k)$ is a yes-instance, then $(G, p, k')$ is a yes-instance. Let $([n], \mathcal{F}, k)$ be a yes-instance of the restricted version of \textsc{Bounded Rank Disjoint Sets}, and let $S_{i_1}, S_{i_2}, \ldots, S_{i_k}$ be the instance solution. As usual, extend $p$ to a coloring $c$ of $G$ by setting $c(u_i)$ to the index of the set in the solution containing $u_i$, i.e.\ $c(u_i)=i_t$ for some $t \in k$ and $i \in S_{c(u_i)}$. Since $S_{i_1}, S_{i_2}, \ldots, S_{i_k}$ are disjoint, and their total size equals the size of the universe, such coloring $c$ always exists uniquely for a fixed solution of $([n], \mathcal{F}, k)$. We claim that there are exactly $k'$ happy edges in $G$ with respect to $c$.
	
	All edges in $G$ are either of type $(u_i, s_{j,t})$ or of type $(u_i, u_j)$. Consider edges of type $(u_i, s_{j,t})$ for a fixed $i \in [n]$. Happy edges among them are those with $c(s_{j,t})=c(u_i)$. Since $c(s_{j, t})=p(s_{j,t})=j$ and $i \in S_{c(u_i)}$, these edges are exacly $(u_i, s_{c(u_i), t})$. Hence, there are $n$ happy edges of this type for a fixed $i \in [n]$ and $n^2$ happy edges of this type in total. It is left to count the number of happy edges of the clique, i.e.\ edges of type $(u_i, u_j)$. Observe that each $u_i$ is colored with a color corresponding to a containing set of the answer. Since each set is of size $d$, the vertices $u_i$ are split by color into $k$ groups of size $d$. Each group contributes exactly $\binom{d}{2}$ happy edges, and no edge connecting vertices from different groups is happy. Thus, there are exactly $k\binom{d}{2}$ happy edges of type $(u_i, u_j)$ in $G$ with respect to $c$. We get that exactly $n^2+k\binom{d}{2}$ edges of $G$ are happy with respect to $c$, hence $(G, p, k')$ is a yes-instance of \MHE.
	
	In the other direction, let $(G, p, k')$ be a yes-instance of \MHE, and let $c$ be an optimal coloring of $G$ extending $p$. At least $k'$ edges are happy in $G$ with respect to $c$. Let us show that \emph{exactly} $k'$ edges are happy in $G$ with respect to $c$.
	\begin{claim}\label{claim:mhe_polycomp_req_cols}
		In any optimal coloring $c$ of $G$ extending $p$, $i \in S_{c(u_i)}$ for each $i \in [n]$.
	\end{claim}
	\begin{claimproof}
		Suppose it is not true, and $c$ is an optimal coloring of $(G,p)$ and $i \notin S_{c(u_i)}$ for some $i \in [n]$.
		For each $j$ with $i \in S_j$, $u_i$ is adjacent to $n$ vertices $s_{j,t}$, which are precolored with color $j$.
		None of edges $(u_i, s_{j,t})$ are happy with respect to $c$, since $j \neq c(u_i)$.
		The only other edges incident to $u_i$ are $n-1$ edges of the clique.
		Thus, $u_i$ is incident to at most $n-1$ happy edges.
		
		Choose arbitrary $j$ with $i \in S_j$, and put $c(u_i)=j$.
		$u_i$ becomes incident to at least $n$ happy edges.
		Happiness of edges not incident with $u_i$ has not changed.
		Thus, the change yields at least one more happy edge.
		A contradiction with the optimality of $c$.
	\end{claimproof}

	\begin{claim}\label{claim:mhe_polycomp_large_groups}
		In any optimal coloring $c$ of $G$ extending $p$, there are at most $k \binom{d}{2}$ happy edges of type $(u_i, u_j)$ in $G$ with respect to $c$.
	\end{claim}
	\begin{claimproof}
		The vertices $u_i$ are split into groups containing vertices of the same color by $c$, so the happy edges of type $(u_i, u_j)$ are exactly the edges inside the groups. By Claim \ref{claim:mhe_polycomp_req_cols}, each $u_i$ is colored with a color corresponding to a set containing $i$ in $c$. Hence, each group contains vertices corresponding to elements of the same set, and thus contains at most $d$ vertices. So each $u_i$ is incident to at most $d-1$ happy edges of type $(u_i, u_j)$, and in total there are at most $n \cdot (d-1)/2=k\binom{d}{2}$ such happy edges in $G$ with respect to $c$.
	\end{claimproof}

	From Claims \ref{claim:mhe_polycomp_req_cols} and \ref{claim:mhe_polycomp_large_groups} follows that at most $n^2+k\binom{d}{2}=k'$ edges are happy in $G$ with respect to $c$. And as seen in the proof of Claim \ref{claim:mhe_polycomp_large_groups}, the only way that yields exactly $k'$ happy edges is when $u_i$ are split by color into disjoint groups of size $d$, each containing vertices corresponding to a set of the initial instance. Hence, if $c$ yields $k'$ happy edges in $G$, $\{S_{c_(u_i)} \mid i \in [n]\}$ is a solution to $([n], \mathcal{F}, k)$. Thus, $([n], \mathcal{F}, k)$ is a yes-instance of \textsc{Bounded Rank Disjoint Sets}. This finishes the whole proof.
\end{proof}

\begin{definition}
	We call a graph family $\mathcal{G}$ \emph{uniformly polynomially instantiable}, if there is an algorithm that, given positive integer $n$ as input, outputs a graph $G$, such that $|V(G)| \ge n$ and $G\in \mathcal{G}$, in $\poly(n)$ time.
\end{definition}

\begin{corollaryO}
	For any uniformly polynomially instantiable graph family $\mathcal{G}$, \MHEfull, parameterized by the distance to graphs in $\mathcal{G}$, does not admit a polynomial compression, unless $\NP \subseteq \coNP/\poly$.
\end{corollaryO}
\begin{proofO}
	Suppose it is not true and there is a uniformly polynomially instantiable graph family $\mathcal{G}$, such that \MHE~parameterized by the distance to graphs in $\mathcal{G}$ admits a polynomial compression. We show how to reduce an instance of \MHE~with $d$ uncolored vertices to an instance of \MHE~with the distance to graph in $\mathcal{G}$ being at most $d$, and then get a contradiction with Theorem \ref{thm:mhe_nopolycomp}.
	
	Let $(G,p,k)$ be an instance of \MHE~with $d$ uncolored vertices. Denote the set of all uncolored vertices in $(G,p)$ by $U$ and the set of all precolored vertices by $P$, so $U\sqcup P=V(G)$. Assume that $G$ has no edge between vertices in $P$, otherwise delete it and decrease $k$ by one if its endpoints are of the same color in $(G, p)$. Construct an instance $(G',p',k')$ as follows. Use the algorithm that output instances of $\mathcal{G}$, with $|P|$ as input. The algorithm gives a graph $G''$, such that $|V(G'')|\ge |P|$ and $G'' \in \mathcal{G}$. Take an arbitrary subset $P' \subseteq V(G'')$ of size $|P|$, and identify its vertices with vertices in $P$. Construct $G'$ by introducing $|U|$ new vertices to $G''$,  that are identified with the vertices of $U$. Denote the set of these vertices by $U'$. Then, add an edge between vertices in $U'$ or between vertices in $U'$ and $P'$ if there is an edge between corresponding vertices in $G$. Finally, construct $p'$ by precoloring vertices in $P'$ with the color of their corresponding vertices in $(G,p)$, leave the vertices of $U'$ uncolored, and precolor all remaining vertices arbitrarily. There may be some happy edges between precolored vertices in $(G', p')$, let their number be $h$. Set $k'=k+h$.
	
	One may easily show that the constructed instance $(G', p', k')$ is a yes-instanse of \MHE~if and only if the initial instance $(G,p,k)$ is a yes-instance of \MHE. Moreover, deletion of $U'$ from $G'$ yields $G'' \in \mathcal{G}$. Hence, $G'$ has the distance to graphs in $\mathcal{G}$ being at most $|U'|=|U|=d$. We therefore obtain the required polynomial reduction that leads to the desired contradiction.
\end{proofO}

In the rest of the section we study kernel bitsize lower bounds for \MHV, parameterized by either $k+\ell$ or $k+h$, where $h$ is the number of potentially happy vertices.
This relates to the result of Agrawal in \cite{Agrawal2018}, where the author showed that \MHV~admits a polynomial kernel with $\O(k^2\ell^2)$ vertices.
We show that, for any $d>0$ and any $\epsilon>0$, there is no kernel of bitsize $\O(k^d \cdot \ell^{2-\epsilon})$ for \MHV.
Similarly, we show that there is no kernel of bitsize $\O(k^d \cdot h^{2-\epsilon})$ for \MHV.
To prove these lower bounds, we refer to the framework of weak cross-compositions, that originates from works of Dell and van Mekelbeek \cite{DellMekelbeek2014}, Dell and Marx \cite{Dell2012} and Hermelin and Wu \cite{Hermelin2012}.
These results are finely summarized by Cygan et al.\ in the chapter on lower bounds for kernelization \cite{Cygan2015}.
We recall the notion of weak cross-compositions.
\iffalse
Due to the lack of space, we omit the notion here.
It can be found in \cite{Cygan2015, Dell2012, Hermelin2012} or in the full version of our paper.
\else

\begin{definition}[\cite{Cygan2015, Dell2012, Hermelin2012}]
	Let $L\subseteq \Sigma^*$ be a language and $Q \subseteq \Sigma^* \times \mathbb{N}$ be a parameterized
	language.
	We say that $L$ \emph{weakly-cross-composes} into $Q$ if there
	exists a real constant $d \ge 1$, called the dimension, a polynomial equivalence
	relation $\mathcal{R}$, and an algorithm $\mathcal{A}$, called the \emph{weak cross-composition}, satisfying
	the following conditions.
	The algorithm $\mathcal{A}$ takes as input a sequence of
	$x_1, x_2, \ldots, x_t \in \Sigma^*$
	that are equivalent with respect to $\mathcal{R}$, runs in time
	polynomial in $\sum^t_{i=1} |x_i|$, and outputs one instance $(y, k) \in \Sigma^* \times \mathbb{N}$ such that:
	\begin{enumerate}[label=(\alph*)]
		\item\label{enum:weak_cc_poly_bound} for every $\delta > 0$ there exists a polynomial $p(\cdot)$ such that for every choice of $t$ and input strings $x_1, x_2, \ldots, x_t$ it holds that $k \le p(\max^t_{i=1} |x_i|) \cdot t^{\frac{1}{d}+\delta}$, and
		\item\label{enum:weak_cc_equiv} $(y, k) \in Q$ if and only if there exists at least one index i such that $x_i \in L$.
	\end{enumerate}
\end{definition}
\fi

The framework of weak cross-compositions is used for proving conditional lower bounds on polynomial compression bitsize.
This is formulated in the following theorem.

\begin{theorem}[\cite{Cygan2015, Dell2012, Hermelin2012}]\label{thm:book_weak_cc_bitsize_lower_bound}
	If an \NP-hard language $L$ admits a weak cross-composition of dimension $d$ into a parameterized language $Q$.
	Then for any $\epsilon > 0$, $Q$ does not admit a polynomial compression with bitsize $\O(k^{d-\epsilon})$, unless $\NP \subseteq \coNP/\poly$.
\end{theorem}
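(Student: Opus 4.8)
The plan is to argue by contradiction, combining the hypothetical compression of $Q$ with the weak cross-composition $\mathcal{A}$ to manufacture an OR-distillation of the \NP-hard language $L$ whose output is \emph{sublinear} in the number of composed instances, and then to invoke the complementary witness lemma of Dell and van Melkebeek \cite{DellMekelbeek2014,Cygan2015} to force $\NP \subseteq \coNP/\poly$. So I would assume, for contradiction, that for some fixed $\epsilon > 0$ the language $Q$ admits a polynomial compression into a language $R$ of bitsize $\O(k^{d-\epsilon})$.

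First I would assemble the distillation. Fix an instance size $s$ and take $t$ strings $x_1, \ldots, x_t \in \Sigma^*$ of length at most $s$. Using the polynomial equivalence relation $\mathcal{R}$, I would bucket the $x_i$ into equivalence classes; since $\mathcal{R}$ is polynomial, strings of length at most $s$ fall into only $\poly(s)$ classes. Within each class the instances are $\mathcal{R}$-equivalent, so I would run $\mathcal{A}$ to obtain a single instance $(y, k)$ of $Q$ with $(y,k)\in Q$ exactly when some $x_i$ in that class lies in $L$ (condition \ref{enum:weak_cc_equiv}), and then apply the assumed compression to turn $(y,k)$ into an instance of $R$. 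Concatenating the compressed instances over all classes yields an output that encodes, via membership in the natural OR-target language, whether at least one $x_i$ lies in $L$.

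The heart of the argument is the bitsize bookkeeping. By condition \ref{enum:weak_cc_poly_bound}, for every $\delta > 0$ one has $k \le p(s)\cdot t^{1/d + \delta}$, so each compressed instance has bitsize $\O\bigl(k^{d-\epsilon}\bigr) = \O\bigl(p(s)^{d-\epsilon} \cdot t^{(1/d+\delta)(d-\epsilon)}\bigr)$. Expanding the exponent gives $(1/d+\delta)(d-\epsilon) = 1 - \epsilon/d + \delta(d-\epsilon)$, and choosing $\delta$ small enough (depending only on $d$ and $\epsilon$) makes it at most $1 - \epsilon'$ for some fixed $\epsilon' > 0$. Summing over the $\poly(s)$ buckets and using concavity of $t \mapsto t^{1-\epsilon'}$, the entire output has bitsize $\O\bigl(\poly(s)\cdot t^{1-\epsilon'}\bigr)$.

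Finally I would feed this into the complementary witness lemma. Setting $t = t(s)$ to be a sufficiently large polynomial in $s$, the factor $\poly(s)$ is absorbed into $t^{\epsilon'}$, so the output bitsize becomes $o(t)$, in particular $\O(t\log t)$. Reading the compressed output as the single message of the polynomial-time first player in an oracle communication protocol for the OR of $t(s)$ instances of $L$ — the computationally unbounded second player decides membership in the OR-target language — the protocol has cost $\O(t(s)\log t(s))$, whence the lemma yields $L \in \coNP/\poly$. Since $L$ is \NP-hard and $\coNP/\poly$ is closed under polynomial-time many-one reductions, this forces $\NP \subseteq \coNP/\poly$, the desired contradiction. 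I expect the delicate point to be the parameter juggling — picking $\delta$ and then $t(s)$ so that condition \ref{enum:weak_cc_poly_bound} and the $\O(t\log t)$ cost bound of the lemma hold simultaneously — together with the bucketing step needed to discharge the assumption that the $x_i$ are $\mathcal{R}$-equivalent.
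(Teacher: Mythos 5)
This theorem is imported verbatim from \cite{Cygan2015,Dell2012,Hermelin2012}; the paper supplies no proof of its own, so the only meaningful comparison is with the standard proof in those references. Your reconstruction is correct and is essentially that standard argument --- bucketing the inputs by the polynomial equivalence relation, composing and then compressing each bucket, the exponent computation $(1/d+\delta)(d-\epsilon)=1-\epsilon/d+\delta(d-\epsilon)<1$ for $\delta$ small enough, and the Dell--van Melkebeek complementary-witness / oracle-communication bound for the OR of an \NP-hard language --- so there is nothing to flag.
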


Dell and Marx \cite{Dell2012} use this framework to show that the \textsc{Vertex Cover} problem parameterized by the solution size does not admit a kernel with subquadratic bitsize.
Their result is the following.

\begin{lemma}[\cite{Dell2012, Cygan2015}]\label{lemma:book_weak_cross_comp_vc}
	There exists a weak cross composition of dimension $2$ from an \NP-hard problem \textsc{Multicolored Biclique} into the \textsc{Vertex Cover} problem parameterized by the solution size.
	In fact, this weak cross-composition $\mathcal{A}$, given instances $x_1, x_2, \ldots, x_t$ of \textsc{Multicolored Biclique} as input, outputs an instance $(G, k')$ of \textsc{Vertex Cover} satisfying
	\begin{itemize}
		\item $|V(G)| \le p(\max^t_{i=1} |x_i|) \cdot \sqrt{t}$, and
		\item $|V(G)|-k' \le q(\max^t_{i=1}|x_i|),$
	\end{itemize}
	for some polynomials $p$ and $q$.
\end{lemma}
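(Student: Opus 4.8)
The plan is to reconstruct the classical $\sqrt{t}$-cross-composition of Dell and Marx, whose whole point is to arrange the input instances on a $\sqrt{t}\times\sqrt{t}$ grid so that the produced graph has only $\O(\sqrt{t})$ vertices up to polynomial factors. First I would fix the polynomial equivalence relation $\mathcal{R}$: two well-formed instances of \textsc{Multicolored Biclique} are $\mathcal{R}$-equivalent exactly when they share the same number of color classes $k$ and the same class size $n$, with all malformed strings collected into one junk class. For inputs of length at most $N$ this yields at most $N^2+1$ classes and is decidable in polynomial time, as the framework requires. After grouping, I would pad the family with trivial no-instances so that $t$ becomes a perfect square, write $s=\sqrt{t}$, and index the instances as $x_{r,c}$ with $(r,c)\in[s]\times[s]$. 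Since all instances in one class agree on $k$ and $n$, I may assume they share vertex names: every $A$-side class is $\{a_{i,x}\mid x\in[n]\}$ and every $B$-side class is $\{b_{j,y}\mid y\in[n]\}$ for $i,j\in[k]$, so that only the edge sets differ across instances, and a biclique in $x_{r,c}$ is a pair of selections $\alpha,\beta\colon[k]\to[n]$ with all $(a_{i,\alpha(i)},b_{j,\beta(j)})$ being edges of $x_{r,c}$.

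Next I would build the graph $G$. For every row $r\in[s]$ I introduce a copy $u^{r}_{i,x}$ of each $A$-vertex, and for every column $c\in[s]$ a copy $w^{c}_{j,y}$ of each $B$-vertex; this already gives $|V(G)|=2skn$, which is bounded by $p(\max_i|x_i|)\cdot\sqrt{t}$ as demanded. The edges encode selection and verification at once. To force a single active row on the $A$-side I make all cross-row pairs adjacent, i.e.\ $u^{r}_{i,x}\sim u^{r'}_{i',x'}$ whenever $r\neq r'$; inside a fixed row I turn each class into a clique ($u^{r}_{i,x}\sim u^{r}_{i,x'}$ for $x\neq x'$) but keep distinct classes non-adjacent. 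The $B$-copies are treated symmetrically with columns replacing rows. For verification I connect $u^{r}_{i,x}$ to $w^{c}_{j,y}$ exactly when $(a_{i,x},b_{j,y})$ is a \emph{non-edge} of the instance $x_{r,c}$. Finally I set $k'=|V(G)|-2k$, so that $|V(G)|-k'=2k\le q(\max_i|x_i|)$, and, since $k'\le 2kn\sqrt{t}$, the parameter obeys the dimension-$2$ bound $k'\le p(\max_i|x_i|)\cdot t^{1/2}$.

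The equivalence ``$(G,k')$ is a yes-instance of \textsc{Vertex Cover} iff some $x_{r,c}$ is a yes-instance'' then reduces, by complementation, to: $G$ has an independent set of size $2k$ iff some instance admits a biclique. For the forward direction, a biclique $(\alpha,\beta)$ in $x_{r^{*},c^{*}}$ yields the set $\{u^{r^{*}}_{i,\alpha(i)}\mid i\in[k]\}\cup\{w^{c^{*}}_{j,\beta(j)}\mid j\in[k]\}$: the $A$-picks sit in one row and in distinct classes, the $B$-picks sit in one column and in distinct classes, and every cross pair is an \emph{edge} of $x_{r^{*},c^{*}}$, hence a non-edge of $G$, so the set is independent. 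Conversely, any independent set $I$ has its $A$-part inside a single row (cross-row cliques) with at most one vertex per class, hence of size at most $k$, and symmetrically its $B$-part has size at most $k$; thus $|I|\le 2k$ always, and $|I|=2k$ forces exactly one $A$-vertex per class in a single row $r^{*}$ and one $B$-vertex per class in a single column $c^{*}$. Independence of the cross pairs then says each required $(a_{i,\alpha(i)},b_{j,\beta(j)})$ is an edge of $x_{r^{*},c^{*}}$, i.e.\ a biclique.

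The step I expect to be most delicate is calibrating the gadgets so that a solution cannot ``cheat'' by mixing rows or columns: the whole argument hinges on the cross-row and cross-column cliques pinning any maximum independent set to one grid cell while the cross-side non-edge encoding performs the edge check, and one must verify both that no independent set can exceed $2k$ and that the value $2k$ is attained only by genuine biclique selections. Once this combinatorial core and the two size bounds are in place, the \NP-hardness of \textsc{Multicolored Biclique} together with Theorem~\ref{thm:book_weak_cc_bitsize_lower_bound} gives the claimed dimension-$2$ weak cross-composition.
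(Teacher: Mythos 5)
This lemma is imported from Dell--Marx and the Cygan et al.\ book; the paper gives no proof of its own, and your reconstruction is precisely the standard construction those sources use: group instances by $(k,n)$, pad to $t=s^2$, build row copies of the $A$-side and column copies of the $B$-side with cross-row/cross-column completeness, class cliques, and complemented verification edges, so that an independent set of size $2k$ exists iff some $x_{r,c}$ has a multicolored biclique, with $|V(G)|=2kn\sqrt{t}$ and $|V(G)|-k'=2k$. Your argument and both size bounds are correct, so the proposal matches the intended (cited) proof.
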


The bound for $|V(G)|-k'$ is given because one can look at an instance $(G,k')$ of \textsc{Vertex Cover} as at an instance $(G, |V(G)|-k')$ of \textsc{Independent Set}.
Then, the solution parameter of \textsc{Independent Set} is bounded with polynomial of the maximum input size, independently of the number of instances $t$.
We are ready to prove the theorem.

\begin{theoremO}\label{thm:mhv_kernel_bitsize}
	For any fixed constant $d$ and any $\epsilon > 0$, \MHVfull~ does not admit polynomial compressions with bitsizes $\O((k^{d} \cdot \ell)^{2-\epsilon})$ and $\O((k^{d} \cdot h)^{2-\epsilon})$, where $h$ is the number of potentially happy vertices, unless $\NP \subseteq \coNP/\poly$.
\end{theoremO}
\begin{proofO}
	Let $d$ be an arbitrary fixed constant.
	We show that \textsc{Multicolored Biclique} admits a polynomial compression into \MHV~parameterized either by $k^d \cdot \ell$ or by $k^d \cdot h$.
	By Theorem \ref{thm:book_weak_cc_bitsize_lower_bound}, it is sufficient for proving the theorem.
	
	We extend the weak cross-composition $\mathcal{A}$ into \textsc{Vertex Cover} from Lemma \ref{lemma:book_weak_cross_comp_vc}.
	Thus, we obtain the desired weak cross-composition $\mathcal{A'}$ into \MHV.
	Firstly, $\mathcal{A'}$ runs $\mathcal{A}$ to obtain an instance $(G, k')$ of \textsc{Vertex Cover}.
	Equivalently, $(G, |V(G)|-k')$ is an instance of \textsc{Independent Set}.
	Let $k=|V(G)|-k'$.
	Transform the instance $(G,k)$ of \textsc{Independent Set} into an equivalent instance $(G', p, k)$, where $G'$ is a graph obtained from $G$ by subdivision of each edge of $G$.
	The precoloring $p$ colors each vertex of $G$ in $G'$ with an unique color corresponding to this vertex, i.e.\ $p(v)=v$ for each $v \in V(G)$.
	The vertices of $G'$ that are introduced because of the subdivision are left uncolored.
	Thus, the number of colors used equals $\ell=|V(G)|$.
	Note that all vertices of $G$ are potentially happy in $(G',p)$, so the number of potentially happy vertices in $(G',p)$ also equals $h=\ell=|V(G)|$.
	The following claim shows that the constructed instance $(G',p,k)$ is equivalent to the instance $(G,k)$ of \textsc{Independent Set}.
	
	\begin{claim}
		For any $S\subseteq V(G)$, $S$ is an independent set in $G$ if and only if all vertices in $S$ can be simultaneously happy in $(G', p)$.
	\end{claim}
	\begin{claimproof}
		Let $S$ be an independent set in $G$.
		We construct a coloring $c$ extending $p$ as follows.
		For each uncolored vertex $e_{uv}$ of $G'$, that corresponds to the subdivision of the edge $uv \in E(G)$, put
		$$c(e_{uv})=\left\{
		\begin{matrix}
		p(u), && \text{if $u \in S$,} \\
		p(v), &&\text{if $v \in S$,} \\
		\text{any}, && \text{otherwise.}
		\end{matrix}
		\right.$$
		
		Since $S$ is an independent set, $u \in S$ and $v \in S$ never hold simultaneously. Thus, for each $v \in S$ and each $e_{uv} \in N_{G'}(v)$, $c(e_{uv})=p(v)=c(v)$. Therefore, all vertices in $S$ are happy with respect to $c$.
		
		In the other direction, let $c$ be a coloring of $G'$ extending $p$.
		Firstly, note that no newly-introduced vertex $e_{uv} \in V(G')\setminus V(G)$ can be happy.
		$e_{uv}$ is adjacent to vertices $u$ and $v$ in $G'$, but $p(u)\neq p(v)$, as $uv$ is an edge of $G$ and $p$ corresponds to a proper coloring of $G$.
		Hence, if $S$ is a subset of vertices that are happy with respect to $c$, then $S \subseteq V(G)$ necessarily.
		Suppose now that $S$ is not an independent set in $G$, i.e.\ $u, v \in S$, but $uv \in E(G)$.
		Consider the vertex $e_{uv}$ in $G'$.
		Since both $u$ and $v$ are happy with respect to $c$, $c(u)=c(e_{uv})$ and $c(v)=c(e_{uv})$.
		But $p(u)\neq p(v)$, a contradiction.
		The proof of the claim is finished.
	\end{claimproof}
	
	Finally, $\mathcal{A}'$ outputs the instance $(G',p,k)$ as an instance of a language parameterized either by $k^d \ell$ or by $k^d h$, i.e.\ $((G',p,k), k^d \ell)$ or $((G',p,k), k^d h)$.
	Since $h=\ell$, these parameters are equal.
	We now show that $\mathcal{A}'$ is a weak cross-composition of dimension $2$.
	We already proved that the instance output by $\mathcal{A}'$ is equivalent to the instance output by $\mathcal{A}$, so the condition \ref{enum:weak_cc_equiv} of weak cross-compositions is satisfied.
	It suffices to prove that the condition \ref{enum:weak_cc_poly_bound} is satisfied as well.
	
	Note that
	\begin{equation}
	\begin{split}
	k^d h = k^d \ell = (|V(G)|-k')^d \cdot |V(G)| \le (q(\max_{i=1}^t|x_i|))^d \cdot p(\max_{i=1}^t|x_i|) \cdot \sqrt{t}\\=(q^d p)(\max_{i=1}^t|x_i|) \cdot \sqrt{t},
	\end{split}
	\end{equation}
	 where $p$ and $q$ are polynomials from Lemma \ref{lemma:book_weak_cross_comp_vc}.
	Thus, $\mathcal{A}'$ satisfies the condition \ref{enum:weak_cc_poly_bound} of weak cross-compositions.
	Hence, $\mathcal{A}'$ is the desired weak cross-composition from \textsc{Multicolored Biclique} to \MHV, and theorem follows.
\end{proofO}

	\section{Parameterization above guarantee}\label{sec:above-guarantee}

This section concerns the above guarantee parameter for \MHV~and \MHE.
By \textit{guarantee} we mean the number of happy vertices or edges that can be obtained with a trivial extension of the precoloring given in input.
The definition of trivial extensions follows.
\begin{definition}
For a graph with precoloring $(G,p)$, we call a full coloring $c$ a \textit{trivial extension} of $p$, if $p$ can be extended to $c$ by choosing a single color $i$ and assigning color $i$ to every uncolored vertex.
In other words, $p(v)=c(v)$ for every $v \in p^{-1}([\ell])$, and $c(u)=c(v)$ for every $u, v \notin p^{-1}([\ell])$.
\end{definition}
We formulate the version of \MHV~where the above guarantee parameter equals one.

\begin{problemx}
	\problemtitle{\scshape Above Guarantee Happy Vertices}
	\probleminput{A graph $G$, a partial coloring $p: S \to [\ell]$ for some $S \subseteq V(G)$ and integer $k$, such that there is a trivial extension of $p$ that yields exactly $k$ happy vertices in $G$.}
	\problemquestion{Is $(G, p, k+1)$ a yes-instance of \MHV?}
\end{problemx}

The \textsc{Above Guarantee Happy Edges} is formulated analogously.
We show that both these problems cannot be solved in polynomial time, unless $\P=\NP$.
We start with \textsc{Above Guarantee Happy Vertices}.
To prove that it is computationally hard, we provide a chain of polynomial reductions.
An intermediate problem in this chain is the \textsc{Weighted MAX-2-SAT} problem.

\begin{problemx}
	\problemtitle{\textsc{Weighted MAX-2-SAT}}
	\probleminput{A boolean formula in $2$-CNF with integer weights assigned to its clauses, an integer $w$.}
	\problemquestion{Is there an assignment of the variables of $\phi$ satisfying clauses of total weight at least $w$ in $\phi$?}
\end{problemx}

\begin{lemmaO}\label{lemma:w_max_2_sat_npc}
	\textsc{Weighted MAX-2-SAT} is \NP-complete even when the inputs $\phi$ and $w$ satisfy
	\begin{enumerate}
		\item The total weight of all positive clauses (i.e., clauses containing at least one positive literal) of $\phi$ equals $w-1$;\label{lemma:cond:w_max_2_sat_pos_clauses}
		\item Each clause of $\phi$ is assigned either weight $1$ or weight $13$;\label{lemma:cond:w_max_2_sat_constant_weights}
		\item Each variable appears exactly three times in $\phi$, at least once positively in a clause containing also a negative literal, and at least once negatively in a clause containing also a positive literal.\label{lemma:cond:w_max_2_sat_three_occs}
	\end{enumerate}
\end{lemmaO}
\begin{proofO}	
	The proof is by a chain of technical polynomial reductions from \textsc{3-SAT}, that is a classical \NP-complete problem. Most of the reductions below are classical, but we carefully follow them to ensure that intermediate formulas have certain important properties.
	
	Let $\phi_0=C_1 \wedge C_2 \wedge \ldots \wedge C_m$ be a formula in CNF on $n$ variables consisting of $m$ clauses, and each clause consists of no more than three literals. Transform the formula to $\phi_1=(C_1 \lor x) \land (C_2 \lor x) \land \ldots \land (C_m \lor x) \land \overline{x}$, where $x$ is a variable new to the formula. We obtain an equivalent \textsc{4-SAT} input formula on $n+1$ variables and $m+1$ clauses, where each clause except one contains a positive literal. We can trivially satisfy $m$ clauses of the formula, but if all $m+1$ clauses of $\phi_1$ can be satisfied, then all clauses of the initial formula $\phi_0$ can be satisfied as well.
	
	Now transform the 4-CNF formula $\phi_1$ to a 3-CNF formula $\phi_2$, so it contains just one negative clause as well as $\phi_1$. Do it as follows. Leave clauses of $\phi_1$, that consist of at most three literals, as is, and introduce them to $\phi_2$. For each clause of $\phi_1$ consisting of four literals, replace it with two clauses, introducing a new variable  specific to this clause. That is, take a clause of $\phi_1$ of length four, say $a \lor b \lor c \lor x$, where $a, b, c$ are some literals (may be positive as well as negative), but $x$ is necessarily a positive literal. Then, take a new variable $z_i$, and introduce clauses $(a \lor b \lor z_i)\land (\overline{z_i} \lor x)$ to $\phi_2$. Note that these two clauses are both positive. Hence, $\phi_2$ is a 3-CNF formula equivalent to $\phi_1$, and the only negative clause in $\phi_2$ is $\overline{x}$.
	
	We then need each variable in $\phi_2$ to appear at most three times. We use a standard technique to achieve that. If a variable $y$ appears $k>3$ times in $\phi_2$, introduce $k$ new variables $y_1, y_2, \ldots, y_k$ to $\phi_2$, replace $i^{\text{th}}$ occurence of $y$ with a variable $y_i$, and introduce $k$ clauses $(\overline{y_1} \lor y_2) \land (\overline{y_2} \lor y_3) \land \ldots \land (\overline{y_k} \lor y_1)$, so that newly-introduced variables are equal in any satisfying assignment. Note that no new negative clause is introduced, so the obtained formula $\phi_3$ is a formula in 3-CNF equivalent to $\phi_2$, with only one negative clause, and each variable appears in $\phi_3$ at most three times. It is possible to satisfy all except one clause of $\phi_3$ simultaneously, but it is \NP-complete to decide whether one can satisfy the entire formula. The only negative clause (that is, a clause consisting only of negative literals) in $\phi_3$ consists of a single literal.
	
	We now transform our special instance of \textsc{3-SAT} into a special instance of \textsc{MAX-2-SAT}.
	In our special case, all clauses of length three are positive.
	We use the classical reduction from \textsc{3-SAT} to \textsc{MAX-2-SAT} \cite{Garey1976}: given an initial formula in 3-CNF, we replace each clause of the formula consisting of exactly three literals, say $a \lor b \lor c$, with ten clauses, introducing a new variable $z_i$: $$z_i \land a \land b \land c \land (\overline{a} \lor \overline{b}) \land (\overline{b}\lor \overline{c})\land (\overline{a} \lor \overline{c})\land (\overline{z_i}\lor a)\land (\overline{z_i}\lor b)\land (\overline{z_i}\lor c).$$
	These ten clauses has a property that, an assignment of the variables of $a, b, c$ satisfies $a \lor b \lor c$  if and only if the same assignment satisfies exactly seven clauses out of these ten clauses, with at least one of the two possible assignments of $z_i$.
	Also, no more than seven clauses can be satisfied simultaneously among these ten clauses.
	
	In our case, $a \lor b \lor c$ is a positive clause.
	If there are three or two positive literals among $a, b, c$, replace $a\lor b \lor c$ with the ten clauses above, introducing a new variable $z_i$.
	If there is only one positive literal among $a, b, c$, replace $a\lor b\lor c$ instead with the following ten clauses, where the literals of $z_i$ are negated: $$\overline{z_i} \land a \land b \land c \land (\overline{a} \lor \overline{b}) \land (\overline{b}\lor \overline{c})\land (\overline{a} \lor \overline{c})\land ({z_i}\lor a)\land ({z_i}\lor b)\land ({z_i}\lor c).$$
	
	Note that after such replacement, the all-true assignment of the variables of $a, b, c$ and $z_i$ satisfies exactly seven out of the ten clauses above.
	
	We do not change any clause consisting of less than three literals. Let the initial formula consist of $m=m_1+m_2+m_3$ clauses, where $m_i$ is the number of clauses consisting of exactly $i$ literals in the initial formula. Then, after the transformation, we obtain $m_1+m_2+10m_3$ clauses consisting of at most two literals, and we ask to satisfy at least $m_1+m_2+7m_3$ of them simultaneously. 
	
	Now continue our chain of reductions and apply the described reduction to $\phi_3$. We obtain a 2-CNF formula $\phi_4$ that consists of $m_1+m_2+10m_3$ clauses of length at most two, where $m_i$ is the number of clauses of length $i$ in $\phi_3$. Consider the all-true assignment in $\phi_3$. It satisfies all clauses of $\phi_3$, except the single clause consisting of one negative literal. Then, the all-true assignment in $\phi_4$ (including the variables that are newly-introduced in $\phi_4$) satisfies exactly $(m_1-1)+m_2+7m_3$ clauses of $\phi_4$. Thus, we again obtain an \NP-hard problem of satisfying one more clause of the formula than in the all-true assignment. And now, our formula is in 2-CNF. Moreover, each variable appears at most three times in $\phi_3$, and it gets copied at most four times in $\phi_4$. Therefore, our formula $\phi_4$ is a 2-CNF formula that also has a property that each variable appears at most twelve times in it.
	
	Then, we again reduce the number of occurences of a variable in our formula. We again do that in the standard way: for each variable $y$ that occurs $k$ times (even for $k<3$), we introduce new variables $y_1, y_2, \ldots, y_k$, replace $i^{\text{th}}$ occurence with $y_i$ for each $i \in [k]$, and introduce new clauses $(\overline{y_1}\lor y_2)\land (\overline{y_2}\lor y_3)\land \ldots \land (\overline{y_k} \lor y_1)$. The obtained formula $\phi_5$ is in 2-CNF and each variable appears in $\phi_5$ at most three times, but we can not just ask to satisfy one more clause in $\phi_5$ than in the all-true assignment. The reason behind this is that now we do not ask to satisfy the whole formula, so the clauses of type $(\overline{y_1}\lor y_2)\land (\overline{y_2}\lor y_3)\land \ldots \land (\overline{y_k} \lor y_1)$ might not be satisfied completely in an appropriate assignment. Thus, it might be the case that in an assignment $\sigma$ that satisfies a sufficient number of clauses of $\phi_5$, $\sigma(y_i) \neq \sigma(y_j)$ for some pair of variables corresponding to the same variable $y$ of the initial formula $\phi_4$. So the obtained problem of satisfying clauses of $\phi_5$ is not equivalent to the initial one.
	
	To overcome this difficulty, we assign weights to the clauses of $\phi_5$, making the newly-introduced clauses of type $(\overline{y_i} \lor y_{i+1})$ weigh more than the regular clauses that come from $\phi_4$. Specifically, we assign weight $13$ to each newly-introduced clause of $\phi_5$, and we assign weight $1$ to each remaining clause of $\phi_5$. $\phi_4$ consists of $m_1 + m_2 + 10m_3$ clauses, and $(m_1-1)+m_2+7m_3$ clauses are satisfied in $\phi_4$ with the all-true assignment. Let $t$ be the number of newly-introduced clauses in $\phi_5$. Then, $\phi_5$ consists of $m_1+m_2+10m_3+t$ clauses, and the total weight of clauses that are satisfied in $\phi_5$ with the all-true assignment is $(m_1-1)+m_2+7m_3+13t$. We ask whether it is possible to satisfy clauses of total weight at least $m_1+m_2+7m_3+13t$ in $\phi_5$. We claim that this problem is equivalent to the problem of satisfying clauses of $\phi_4$.
	
	\begin{claim}
		There is an assignment satisfying at least $s$ clauses of $\phi_4$ if and only if there is an assignment satisfying clauses of total weight at least $s+13t$ in $\phi_5$.
	\end{claim}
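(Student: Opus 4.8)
The plan is to prove the two directions of the equivalence separately, with the backward direction ($\Leftarrow$) carrying essentially all of the work. For the forward direction, given an assignment $\sigma$ of $\phi_4$ satisfying at least $s$ clauses, I would define an assignment $\sigma'$ of $\phi_5$ by giving every copy $y_i$ of an original variable $y$ the value $\sigma(y)$. Then every newly-introduced clause $(\overline{y_i}\lor y_{i+1})$ is satisfied, since its two literals refer to copies of the same variable that are set equal; thus all $t$ heavy clauses contribute weight $13t$. The remaining weight-$1$ clauses of $\phi_5$ are precisely the clauses of $\phi_4$ with occurrences renamed to copies, and under $\sigma'$ each is satisfied if and only if the corresponding clause of $\phi_4$ is satisfied by $\sigma$; hence their satisfied weight is at least $s$, for a total of at least $s+13t$.

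For the backward direction the difficulty is that an assignment $\sigma'$ of $\phi_5$ of weight at least $s+13t$ need not make the copies of a variable equal, so it does not directly induce an assignment of $\phi_4$. The key idea is a normalization step that exploits the weight gap $13>12$. I would process the original variables one at a time; for a variable $y$ whose copies $y_1,\ldots,y_k$ (recall $k\le 12$, since each variable occurs at most twelve times in $\phi_4$) are not all equal under the current assignment, I set all of them to true. Because the cyclic clauses of $y$ encode the implications $y_1\Rightarrow y_2\Rightarrow\cdots\Rightarrow y_k\Rightarrow y_1$, a non-constant assignment of the copies has at least one true-to-false transition around the cycle, so at least one heavy clause of $y$ is currently violated; making all copies equal satisfies all $k$ of them, a gain of at least $13$. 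On the debit side, only clauses containing a copy of $y$ can change their status, and at most $k\le 12$ of these are weight-$1$ clauses, so the flip loses weight at most $12$. Hence each step does not decrease (in fact strictly increases) the total weight, while touching only copies of $y$ and therefore leaving the cyclic clauses of every other variable intact.

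After all variables have been processed I obtain an assignment $\sigma''$ of weight still at least $s+13t$ in which every original variable has all its copies in agreement, so all $t$ heavy clauses are satisfied and contribute exactly $13t$. Consequently the weight-$1$ clauses carry at least $s$; reading off $\sigma(y)$ as the common value of the copies of $y$ gives an assignment of $\phi_4$ under which each original clause is satisfied exactly when its renamed image in $\phi_5$ is, so $\sigma$ satisfies at least $s$ clauses of $\phi_4$, as required.

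I expect the main obstacle to be the careful bookkeeping of the normalization step: confirming that fixing the copies of one variable cannot damage the heavy clauses of any other variable (these involve disjoint sets of copies), and checking that the crude count ``at most $12$ unit-weight clauses touch the copies of any fixed variable'' is exactly the bound that makes the weight $13$ large enough to guarantee monotonicity of the total weight throughout the process.
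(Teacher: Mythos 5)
Your proposal is correct and uses essentially the same argument as the paper: both directions rest on the observation that the copies of any one variable of $\phi_4$ touch at most $12$ unit-weight clauses, so equalizing them gains at least $13$ from the cyclic clauses while losing at most $12$. The only cosmetic difference is that the paper phrases the backward direction as an exchange argument on a maximum-weight assignment (setting the copies to false and deriving a contradiction with optimality), whereas you normalize an arbitrary assignment variable by variable (setting the copies to true); both choices satisfy all cyclic clauses and the counting is identical.
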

	\begin{claimproof}
		The proof in one direction is trivial. Given an appropriate assignment $\sigma$ of the variables of $\phi_4$, it is easy to construct an appropriate assignment $\sigma'$ for $\phi_5$. Just put $\sigma'(y_i)=\sigma(y)$ for $i^\text{th}$ occurence of a variable $y$ in $\phi_4$. $\sigma'$ satisfies all $s$ clauses in $\phi_5$ corresponding to the clauses satisfied by $\sigma$ in $\phi_4$, and satisfies all $t$ newly-introduced clauses of weight $13$ in $\phi_5$.
		
		In the other direction, take an assignment $\sigma'$ that satisfies the  maximum possible number of clauses in $\phi_5$ simultaneously. We argue that $\sigma'$ satisfies all clauses of weight $13$ in $\phi_5$. Suppose it's not true and for some variable $y$ appearing $k$ times in $\phi_4$, at least one clause in $(\overline{y_1}\lor y_2)\land (\overline{y_2}\lor y_3)\land \ldots \land (\overline{y_k} \lor y_1)$ in $\phi_5$ is not satisfied by $\sigma'$. Apart from two of these $k$ clauses of weight $13$, each $y_i$ appears in exactly one clause of weight $1$ in $\phi_5$, that comes from $\phi_4$ initially. Recall that each variable of $\phi_4$ appears at most $12$ times in $\phi_4$. Hence, the variables $y_1, y_2, \ldots, y_k$ together touch at most $12$ clauses of weight $1$ in $\phi_5$. Change $\sigma'$ by setting $\sigma'(y_i)=0$ for each $i \in [k]$. Some clauses of weight $1$ may become unsatisfied, but there are at most $12$ of them. Thus, at most $12$ weight is lost with the change. At the other hand, all clauses of weight $13$ become satisfied, and at least $13$ weight is gained with the change. At least $1$ weight is gained with the change of $\sigma'$ in total --- a contradiction with the optimality of $\sigma'$.
		
		Thus, if $\sigma'$ satisfies clauses of total weight at least $s+13t$ in $\phi_5$, we may be sure that all $t$ clauses of weight $13$ are satisfied by $\sigma'$. That is, $\sigma'(y_i)=\sigma'(y_j)$ holds for each pair $y_i, y_j$ of variables corresponding to occurences of the same variable $y$ in $\phi_4$. We get that an assignment $\sigma$, constructed by $\sigma(y)=\sigma'(y_1)$, satisfies at least $s$ clauses of $\phi_4$. The claim statement follows. 
	\end{claimproof}

	We conclude that obtained problem of satisfying clauses of $\phi_5$ is equivalent to the initial problem of satisfying $\phi_0$.
	Moreover, $\phi_5$ satisfies the lemma conditions.
	All reductions presented are polynomial, and the lemma statement follows.
\end{proofO}

The chain continues with the following version of the \textsc{Independent Set} problem.

\begin{problemx}
	\problemtitle{\textsc{Independent Set Above Coloring}}
	\probleminput{A graph $G$, properly colored with $\ell$ colors: $V(G)=V_1\sqcup V_2 \sqcup \ldots \sqcup V_\ell$.}
	\problemquestion{Is there an independent set of size at least $\max\limits_{i=1}^\ell |V_i|+1$ in $G$?}
\end{problemx}

\begin{lemmaO}\label{lemma:is_above_coloring_npcomp}
	\textsc{Independent Set Above Coloring} is \NP-complete for $\ell=3$.
\end{lemmaO}
\begin{proofO}
	We reduce from the special case of \textsc{Weighted MAX-2-SAT} from the statement of Lemma \ref{lemma:w_max_2_sat_npc}.
	We reduce to the weighted version of \textsc{Independent Set} first, and then show how to get rid of the weights.
	
	Let be given a formula $\phi$ and an integer $w$ as an instance of \textsc{Weighted MAX-2-SAT} satisfying the conditions of Lemma \ref{lemma:w_max_2_sat_npc}.
	We construct a graph $G$ with weights assigned to its vertices in the same way as in the classical reduction from \textsc{Satisfiability} to \textsc{Clique} (as a complement of \textsc{Independent Set}) by Cook \cite{Cook1971} or Karp \cite{Karp1972}.
	That is, for each literal in $\phi$, we introduce a new vertex in $G$.
	Since the clauses in $\phi$ are weighted, we assign each vertex a weight equal to the weight of the clause of the corresponding literal. Then, for each clause of length two in $\phi$, we connect the vertices corresponding to its literals by an edge in $G$.
	Finally, we connect each pair of vertices in $G$ that correspond to opposite literals of the same variable in $\phi$ by an edge.
	A claim follows.
	
	\begin{claim}\label{claim:sat_is_reduction}
		$\phi$ has an assignment satisfying clauses of total weight at least $w$ if and only if $G$ has an independent set of total weight at least $w$.
	\end{claim}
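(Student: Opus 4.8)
The plan is to follow the classical correspondence between satisfying assignments and independent sets, but to track weights throughout. The two structural facts that drive everything are: the edges placed inside each clause force any independent set to contain at most one vertex per clause, and the edges joining opposite literals of a variable guarantee that the literals chosen by an independent set are mutually consistent as a partial assignment. With these two facts in hand, both directions reduce to matching up ``satisfied clauses'' with ``selected vertices'' and observing that corresponding objects carry equal weight.

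For the forward direction I would start from an assignment $\sigma$ that satisfies clauses of total weight at least $w$. For each satisfied clause I pick one literal made true by $\sigma$ and place the corresponding vertex into a candidate set $I$. I then verify that $I$ is independent: two vertices of the same clause are never simultaneously chosen because I select exactly one vertex per clause, and two vertices corresponding to opposite literals $x$ and $\overline{x}$ of a single variable cannot both be chosen, since $\sigma$ cannot make both true. As each selected vertex carries precisely the weight of its clause, the total weight of $I$ equals the combined weight of the chosen satisfied clauses, which is at least $w$.

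For the reverse direction I would take an independent set $I$ of total weight at least $w$ and read off an assignment from it. Because all vertices belonging to one clause are pairwise adjacent (or, for a unit clause, there is only a single such vertex), $I$ contains at most one vertex per clause; and because opposite literals are joined by an edge, $I$ never contains vertices for both $x$ and $\overline{x}$. Hence I may set each variable so that every literal whose vertex lies in $I$ becomes true, assigning the remaining variables arbitrarily; this gives a well-defined assignment $\sigma$. Every clause that contributed a vertex to $I$ is then satisfied by $\sigma$, and since $I$ meets each clause at most once, the total weight of these satisfied clauses is at least the weight of $I$, hence at least $w$.

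I do not expect a genuine obstacle here, as this is essentially the Cook--Karp reduction refined to the weighted setting; the only points demanding care are the bookkeeping for unit clauses, which contribute a single vertex and no internal edge, and the verification that the weight of $I$ is never overcounted. The latter is precisely where the within-clause clique structure is used, ensuring each clause is counted at most once when summing weights.
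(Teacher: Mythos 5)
Your proof is correct and is exactly the argument the paper intends: the paper states this claim without proof, deferring to the classical Cook--Karp reduction from satisfiability to independent set/clique, and your weighted bookkeeping (one chosen true literal per satisfied clause in one direction, at most one vertex per clause plus consistency via the opposite-literal edges in the other) is the standard way to verify it. No differences worth noting.
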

	
	We now show how to color $G$ with three colors properly.
	Firstly, for each positive clause of $\phi$, take an arbitrary positive literal of this clause and color the corresponding vertex in $G$ with color $1$.
	Thus the set $V_1$ of the vertices colored with color $1$ in $G$ becomes constructed.
	Note that by condition \ref{lemma:cond:w_max_2_sat_pos_clauses} of Lemma \ref{lemma:w_max_2_sat_npc}, the total weight of the vertices in $V_1$ equals $w-1$.
	Moreover, $V_1$ forms an independent set in $G$, since its vertices correspond to positive literals from pairwise different clauses.
	
	Then, take each positive clause of $\phi$ that contains a negative literal.
	That is, a clause consisting of one negative and one positive literal.
	Color the negative literal of the clause with color $2$.
	Note that for now vertices of color $2$ form an independent set in $G$, since they correspond to negative literals from pairwise different clauses.
	
	By condition \ref{lemma:cond:w_max_2_sat_three_occs} of Lemma \ref{lemma:w_max_2_sat_npc}, each variable of $\phi$ has three corresponding vertices in $G$.
	Moreover, at least two corresponding vertices are already colored in $G$: at least one corresponding to a positive occurence in a negative clause is colored with color $1$ and at least one corresponding to a negative occurence in a positive clause is colored with color $2$.
	Hence, for each variable of $\phi$, at most one vertex corresponding to this variable is uncolored.
	
	Take any uncolored vertex $v$ in $G$.
	It corresponds either to a positive literal, say $x$, or a negative literal, say $\overline{x}$.
	If it corresponds to a positive literal $x$, then this literal is a literal from a clause consisting of two positive literals, otherwise $v$ would be colored with color $1$.
	Let this clause be $x \lor y$.
	Since the clause is positive and the vertex $v$ corresponding to $x$ is uncolored, then the vertex corresponding to $y$ is colored with color $1$.
	This vertex is connected to $v$ by an edge, so color $v$ with color $3$.
	
	If $v$ corresponds to a literal $\overline{x}$, then this literal is necessarily from a negative clause.
	If this clause consists just of $\overline{x}$, color $v$ with color $2$ or color $3$ arbitrarily.
	Otherwise, the clause is of length two, say $\overline{x} \lor \overline{y}$.
	The vertex corresponding to $\overline{y}$ is not colored, since it is not from a positive clause.
	Both vertices corresponding to $\overline{x}$ and $\overline{y}$ are uncolored, so color one of them with color $2$ and the other with color $3$ arbitrarily.
	Note that the vertices colored with color $2$ still correspond to negative literals from pairwise different clauses, hence they still form an independent set in $G$.
	The vertices of color $3$ correspond to pairwise different variables from pairwise different clauses, so they form an independent set in $G$ as well.
	
	We colored $G$ with three colors, that is, partitioned $V(G)$ into $V(G)=V_1 \sqcup V_2 \sqcup V_3$, where each of $V_1, V_2, V_3$ forms an independent set in $G$.
	Though the constructed graph $G$ has weights assigned to its vertices.
	Note that positive integer clauses weights can be avoided in \textsc{Weighted MAX-2-SAT} just by replacing clause each clause of weight $t$ with $t$ copies of this clause in the formula.
	By condition \ref{lemma:cond:w_max_2_sat_constant_weights} of Lemma \ref{lemma:w_max_2_sat_npc}, some clauses get copied $13$ times, and the others remain appearing just once in the formula.
	Then, the reduction from \textsc{MAX-2-SAT} to \textsc{Independent Set} is the same as for the weighted versions of the problems.
	
	The thing why we needed weights is to simiplify the coloring of $G$ with three colors.
	It is easy to see that, after replacing weights with copies in $\phi$, $G$ remains the same, just vertex weights become replaced with vertex copies.
	More importantly, no edge is added between different copies of the same vertex. Thus, the new, unweighted graph $G'$ obtained by a reduction from the new, unweighted formula $\phi'$, can be colored with three colors in the same way as $G$ can.
	It's just that all vertices in $G'$ that are copies of the same vertex in $G$ receive the same color as their original vertex in $G$.
	Vertices of $G'$ becomes partitioned in three independent sets $V(G')=V'_1\sqcup V'_2 \sqcup V'_3$, and $|V'_1|=w-1$. Moreover, by Lemma \ref{lemma:w_max_2_sat_npc} and Claim \ref{claim:sat_is_reduction}, finding an independent set of size at least $w$ in $G'$ is an \NP-complete problem. All reductions and algorithms provided in the proof are polynomial, and the lemma follows.
\end{proofO}

\begin{theoremO}\label{thm:mhv_above_guar}
	\textsc{Above Guarantee Happy Vertices} is \NP-complete even when $\ell=3$.
\end{theoremO}
\begin{proofO}
	We reduce from \textsc{Independent Set Above Coloring} with $\ell=3$, that is \NP-complete by Lemma \ref{lemma:is_above_coloring_npcomp}.
	Let $G$ be an instance of \textsc{Independent Set Above Coloring}.
	That is, $G$ is colored properly with three colors: $V(G)=V_1 \sqcup V_2 \sqcup V_3$; and it is asked to find an independent set of size at least $\max\limits_{i=1}^3|V_i|+1$ in $G$.
	
	We construct an instance $(G',p,k)$ of \textsc{Above Guarantee Happy Vertices} as follows.
	Obtain $G'$ as a subdivision of $G$.
	Construct the partial coloring $p$ as follows.
	Left all new vertices appeared after subdivision uncolored.
	For each other vertex, that is, for each vertex in $V(G)$, precolor it with the same color as it is colored in $G$.
	That is, for each $i \in [3]$, and for each $v \in V_i$, put $p(v)=i$. Finally, put $k=\max\limits_{i=1}^3 |V_i|$.
	Note that the reduction is done in polynomial time.
	We formulate that the constructed instance of \textsc{Above Guarantee Happy Vertices} is equivalent to the initial instance of \textsc{Independent Set Above Coloring} in the following claim.
	
	\begin{claim}
		For any $S\subseteq V(G)$, $S$ is an independent set in $G$ if and only if all vertices in $S$ can be simultaneously happy in $(G', p)$.
	\end{claim}
	\begin{claimproof}
		Let $S$ be an independent set in $G$.
		We construct a coloring $c$ extending $p$ as follows.
		For each uncolored vertex $e_{uv}$ of $G'$, that corresponds to the subdivision of the edge $uv \in E(G)$, put
		$$c(e_{uv})=\left\{
		\begin{matrix}
		p(u), && \text{if $u \in S$,} \\
		p(v), &&\text{if $v \in S$,} \\
		\text{any}, && \text{otherwise.}
		\end{matrix}
		\right.$$
		
		Since $S$ is an independent set, $u \in S$ and $v \in S$ never hold simultaneously. Thus, for each $v \in S$ and each $e_{uv} \in N_{G'}(v)$, $c(e_{uv})=p(v)=c(v)$. Therefore, all vertices in $S$ are happy with respect to $c$.
		
		In the other direction, let $c$ be a coloring of $G'$ extending $p$.
		Firstly, note that no newly-introduced vertex $e_{uv} \in V(G')\setminus V(G)$ can be happy.
		$e_{uv}$ is adjacent to vertices $u$ and $v$ in $G'$, but $p(u)\neq p(v)$, as $uv$ is an edge of $G$ and $p$ corresponds to a proper coloring of $G$.
		Hence, if $S$ is a subset of vertices that are happy with respect to $c$, then $S \subseteq V(G)$ necessarily.
		Suppose now that $S$ is not an independent set in $G$, i.e.\ $u, v \in S$, but $uv \in E(G)$.
		Consider the vertex $e_{uv}$ in $G'$.
		Since both $u$ and $v$ are happy with respect to $c$, $c(u)=c(e_{uv})$ and $c(v)=c(e_{uv})$.
		But $p(u)\neq p(v)$, a contradiction.
		The proof of the claim is finished.
	\end{claimproof}

	It is then easy to see that with a trivial extension of $p$ with color $i$ one can obtain exactly $|V_i|$ happy vertices in $G'$.
	Hence, $k$ is indeed a number of happy vertices that can be obtained with a trivial extension of $p$.
	Finally, finding a coloring yielding at least $k+1$ happy vertices in $(G', p)$ is equivalent to finding an independent set of size $\max_{i=1}^3 |V_i|+1$.
	Thus, \textsc{Above Guarantee Happy Vertices} is \NP-complete for $\ell=3$.
\end{proofO}

We now turn onto \textsc{Above Guarantee Happy Edges}.
We provide a reduction from the following well-known \NP-complete problem.

\begin{problemx}
	\problemtitle{\textsc{Exact 3-Cover (X3C)} \cite{Garey1976, Karp1972}}
	\probleminput{An integer $n$, a collection $\mathcal{S}=\{S_1, S_2, \ldots, S_m\}$ of three-element subsets of $[3n]$.
	\problemquestion{Is there an exact cover of $[3n]$ with elements of $\mathcal{S}$, i.e.\ is there a sequence $i_1, i_2, \ldots, i_n$, such that $S_{i_1}\cup S_{i_2}\cup \ldots \cup S_{i_n}=[3n]$?}}
\end{problemx}

\begin{theoremO}\label{thm:mhe_above_guar}
	\textsc{Above Guarantee Happy Edges} is \NP-complete.
\end{theoremO}
\begin{proofO}
	We reduce from the \textsc{Exact 3-Cover} problem. Let $(n, \mathcal{S}=(S_1, S_2,$ $ \ldots, S_m))$ be an instance of \textsc{X3C}. In our reduction, we need $n$ to be an odd number. If $n$ is even, we can always increase $n$ by one and add the set $\{3n+1, 3n+2, 3n+3\}$ to $\mathcal{S}$ and obtain an equivalent instance of \textsc{X3C}. We also assume that each element of $[3n]$ is contained in at least one set in $\mathcal{S}$. Now we construct an instance $(G, p, k)$ of \MHE~as follows.
	
	For each integer $i \in [3n]$, introduce a new uncolored vertex $u_i$ to $G$. Then, for each $i, j \in [3n]$, $i\neq j$, introduce an edge between $u_i$ and $u_j$ in $G$, so $u_1, u_2, \ldots, u_{3n}$ form a clique in $G$. These vertices correspond to the elements in $[3n]$.
	
	For each set $S_i \in \mathcal{S}$, introduce $3n$ new vertices in $G$, namely $v_{i,1}, v_{i,2}, \ldots, v_{i, 3n}$. Each of these vertices we precolor with color $i$, i.e.\ $p(v_{i,j})=i$ for each $j \in [3n]$. We also connect each of these vertices to all vertices corresponding to the elements of $S_i$, i.e.\ introduce an edge between $v_{i,j}$ and $u_t$ for each $j \in [3n]$ and each $t \in S_i$.
	
	Finally, we introduce a group of new $3(n+1)/2$ vertices $w_1, w_2, \ldots, w_{3(n+1)/2}$ in $G$. Each of them we precolor with color $m+1$. We also introduce an edge $(w_i, u_j)$ to $G$ for every $i \in [3(n+1)/2]$ and every $j \in [3n]$, except for the edge $(w_1, u_1)$. Hence, we introduce $3(n+1)/2 \cdot 3n - 1$ such edges.
	
	We then set $k=9n^2+3n-1$ and say that this number of happy edges in $G$ can be obtained by coloring every uncolored vertex with color $m+1$. Indeed, say that $u_j$ is colored with color $m+1$ for every $j\in[3n]$. These vertices form clique in $G$, hence all $\binom{3n}{2}$ edges of the clique are happy. Only edges left that are happy are edges of type $(w_i, u_j)$. Recall that we introduced $3(n+1)/2 \cdot 3n - 1$ such edges, hence we get that
	$$\binom{3n}{2}+3(n+1)/2\cdot 3n-1=3n\cdot\left((3n-1)/2+3(n+1)/2\right)-1=9n^2+3n-1$$
	edges are happy in $G$ with respect to such trivial extension of $p$.
	
	We now argue that $(n, \mathcal{S})$ is a yes-instance of \textsc{X3C} if and only if $(G, p, k+1)$ is a yes-instance of \MHE. Note that we use $m+1$ colors, i.e.\ $\ell=m+1$.
	
	Let prove that if $(n, \mathcal{S})$ is a yes-instance of \textsc{X3C}, then $(G, p, k+1)$ is a yes-instance of \MHE. Let $i_1, i_2, \ldots, i_n$ be an answer to $(n, \mathcal{S})$. Then for each $j \in [3n]$, there is a unique $t(j) \in [n]$, such that $t(j) \in \{i_1, i_2, \ldots, i_n\}$ and $j \in S_{t(j)}$. Extend $p$ with a coloring $c$ such that $c(u_j)=t(j)$ for each $j \in [3n]$. We claim that there are exactly $k+1$ happy edges in $G$ with respect to $c$.
	
	$G$ consists of edges that have exactly one endpoint of type $u_j$ and edges of type $(u_{j_1}, u_{j_2})$ for $j_1\neq j_2$. Each $u_j$ is connected to exactly $3n$ vertices $v_{t(j), 1}, v_{t(j), 2}, \ldots, v_{t(j), 3n}$ of color $t(j)$, since $u_j \in S_{t(j)}$. Hence there are exactly $9n^2$ happy edges in $G$ that have exactly one endpoint of type $u_j$, with respect to $c$. Observe that an edge $(u_{j_1}, u_{j_2})$ is happy with respect to $c$ if and only if $t(j_1)=t(j_2)$, i.e.\ $j_1$ and $j_2$ are covered by the same set $S_{t(j_1)}$. Since the size of every set in $\mathcal{S}$ is exactly three, each $u_j$ is adjacent to exactly two vertices of type $u_{j'}$ of the same color. Thus, there are exactly $3n$ happy edges of type $(u_{j_1}, u_{j_2})$ in total in $G$, with respect to $c$. Happy edges of both types sum up to a total of $9n^2+3n$ happy edges.
	
	Let now prove in the other direction. Let $c$ be an optimal coloring of $G$ extending $p$ such that at least $k+1$ edges are happy in $G$ with respect to $c$.
	
	\begin{claim}\label{claim:he_triv_req_cols}
		For any optimal coloring $c$ of $G$ and any $j \in [3n]$, either $c(u_j)=m+1$ or $u_j \in S_{c(u_j)}$.
	\end{claim}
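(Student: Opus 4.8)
The plan is a local exchange argument. Suppose toward a contradiction that $c$ is optimal but fails the claim for some $j\in[3n]$; that is, $c(u_j)=i$ with $i\in[m]$ and $j\notin S_i$. I will recolour only $u_j$ and show that the number of happy edges strictly increases, contradicting optimality of $c$.

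First I would describe the neighbourhood of $u_j$. Its neighbours fall into three groups: the $3n-1$ remaining clique vertices $u_{j'}$; the vertices $v_{i',t}$ coming from sets $S_{i'}$ with $j\in S_{i'}$ (and only those, since $u_j v_{i',t}\in E(G)$ exactly when $j\in S_{i'}$); and the $w$-vertices. Under $c$, no edge from $u_j$ to a $w$-vertex is happy, as the $w$-vertices carry colour $m+1\ne i$. Crucially, no edge from $u_j$ to a $v$-vertex is happy either: the only $v$-vertices of colour $i$ are $v_{i,1},\dots,v_{i,3n}$, and $u_j$ is adjacent to none of them because $j\notin S_i$, while every $v$-neighbour of $u_j$ has colour $i'\ne i$. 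Hence the only happy edges incident to $u_j$ are clique edges to vertices $u_{j'}$ with $c(u_{j'})=i$, of which there are at most $3n-1$.

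Next, since every element of $[3n]$ lies in at least one set of $\mathcal{S}$, I would fix a set $S_{i^*}$ with $j\in S_{i^*}$ and recolour $u_j$ to colour $i^*$; note $i^*\ne i$ because $j\in S_{i^*}$ but $j\notin S_i$. Since $j\in S_{i^*}$, the vertex $u_j$ is now adjacent to all $3n$ vertices $v_{i^*,1},\dots,v_{i^*,3n}$, each precoloured $i^*$, so these $3n$ edges become happy. Edges not incident to $u_j$ are unaffected by the recolouring. Therefore the net change in happy edges equals the $3n$ newly happy $v$-edges, plus the clique edges to vertices coloured $i^*$, minus the at most $3n-1$ clique edges coloured $i$ that are lost; this is at least $3n-(3n-1)=1>0$, the desired contradiction. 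Consequently every $u_j$ with $c(u_j)\le m$ satisfies $j\in S_{c(u_j)}$, and the only remaining possibility is $c(u_j)=m+1$, which is exactly the claim.

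The single delicate point---and the reason the construction attaches $3n$ copies $v_{i,1},\dots,v_{i,3n}$ to each set rather than a single vertex---is the concluding inequality: the $3n$ happy edges gained from aligning $u_j$ with a set containing $j$ must strictly dominate the at most $3n-1$ clique edges the recolouring can destroy. The $3n$-fold multiplicity of the $v$-gadget is precisely what guarantees this, and once it is in place the exchange argument closes immediately.
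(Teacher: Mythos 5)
Your proof is correct and follows essentially the same route as the paper's: bound the happy edges incident to $u_j$ by the $3n-1$ clique edges (since no $v$- or $w$-neighbour can share colour $i$ when $j\notin S_i$), then recolour $u_j$ to the index of a set containing $j$ to gain $3n$ happy edges to the precolored copies, a strict improvement contradicting optimality. Your closing remark about why the $3n$-fold multiplicity of the $v$-gadget is needed is a nice observation, but the argument itself is the paper's.
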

	\begin{claimproof}
		Suppose it's not true, and $c$ is an optimal coloring of $G$ and $c(u_j) \neq m+1$ and $u_j \notin S_{c(u_j)}$ for some $j \in [3n]$. Then only happy edges that are incident to $u_j$ are edges of the clique, since any other edge incident to $u_j$ has the other endpoint precolored with either color $m+1$ or color $i$ for any set $S_i$ containing $j$. Hence, $u_j$ is incident to at most $3n-1$ happy edges with respect to $c$.
		
		Now change the color of $u_j$ to any color $i$ such that $j \in S_i$. Such color exists, since we assumed that any element in $[3n]$ is contained in at least one set. $u_j$ is adjacent to $3n$ vertices of color $i$, hence $u_j$ now is incident to at least $3n$ happy edges. Thus, such change in $c(u_j)$ allows to win at least one happy edge. A contradiction with optimality of $c$.
	\end{claimproof}

	Let $s$ be the number of vertices among $u_1, u_2, \ldots, u_{3n}$ colored with color $m+1$ in $c$, i.e.\ $s=|c^{-1}(m+1)\cap \{u_1, u_2 ,\ldots, u_{3n}\}|$. The only other vertices colored with color $m+1$ are vertices $w_1, w_2, \ldots, w_{3(n+1)/2}$, hence there are at most $\binom{s}{2}+s\cdot 3(n+1)/2$ happy edges incident to vertices of color $m+1$.
	
	By Claim \ref{claim:he_triv_req_cols}, every other vertex of type $u_j$ (there are $3n-s$ of them) is colored with a color corresponding to a set containing $j$. Thus, each such vertex is adjacent to $3n$ precolored vertices of the same color and to at most two vertices of the same color in the clique. Hence, there are at most $(3n-s) \cdot 3n + (3n-s)$ happy edges not incident to vertices of color $m+1$ in $G$ with respect to $c$.
	
	In total, we get that at most
	$$\binom{s}{2}+s \cdot 3(n+1)/2 +(3n-s)\cdot (3n+1)=9n^2+3n+\binom{s}{2}+s \cdot 3(n+1)/2-s\cdot(3n+1)$$
	edges are happy in $G$ with respect to $c$. Recall that $c$ is a coloring yielding at least $9n^2+3n$ happy edges in $G$, hence
	$$\binom{s}{2}+s \cdot 3(n+1)/2-s\cdot(3n+1)\ge 0,$$
	$$s \cdot \left((s-1)/2+3(n+1)/2-(3n+1)\right) \ge 0,$$
	$$s \cdot \left((s-1)+3(n+1)-(6n+2)\right)\ge 0,$$
	$$s \cdot (s - 3n) \ge 0.$$
	
	The inequality above holds only when $s \le 0$ or $s \ge 3n$, but since $s \in \{0, 1, \ldots, 3n\}$, it is either $s=0$ or $s=3n$. From the construction of $(G, p)$ we already know that when $s=3n$, only $k=9n^2+3n-1$ edges are happy in $G$, hence the only option left is $s=0$. Thus, each vertex of type $u_j$ is colored with a color corresponding to a set containing $j$, but not with color $m+1$.
	
	As we observed earlier, each $u_j$ is adjacent to $3n$ precolored vertices of the same color and to at most two vertices of the same color among vertices of the clique. Hence, the only way to obtain $9n^2+3n$ happy edges is when each $u_j$ has \textit{exactly} two neighbours of the same color in the clique. Thus, for each color presented among $c(u_j)$, there are exactly three vertices of such color in the clique. This yields a solution of the initial instance of \textsc{X3C}: it is sufficient to take sets with indices in the set $\{c(u_1), c(u_2), \ldots, c(u_{3n})\}$. This finishes the proof of the theorem.
\end{proofO}

	\section{ETH and Set Cover Conjecture based lower bounds}

In this section, we show lower bounds for exact algorithms for \MHV~and \MHE, based on the popular Exponential Time Hypothesis and the Set Cover Conjecture.
We start with the Set Cover Conjecture and the following problem.

\begin{problemx}
	\problemtitle{\textsc{Set Partitioning}}
	\probleminput{An integer $n$, a set family $\mathcal{F}=\{S_1, S_2, \ldots, S_m \}$ over a universe $U$ with $|U|=n$.}
	\problemquestion{Is there a sequence of pairwise disjoint sets $S_{i_1}, S_{i_2}, \ldots, S_{i_k}$ in $ \mathcal{F}$, such that $\bigsqcup\limits_{j=1}^{k} S_{i_j} = U$?}
\end{problemx}

\begin{theorem}[\cite{Cygan2016}]
	For any $\epsilon > 0$, \textsc{Set Partitioning} cannot be solved in time $\Ostar{(2-\epsilon)^n}$, unless the Set Cover Conjecture fails.
\end{theorem}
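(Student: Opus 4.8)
The plan is to prove the statement by a reduction from \textsc{Set Cover}, the canonical problem underlying the Set Cover Conjecture, which asserts that deciding whether a family over a universe of size $n$ admits a cover of size at most $t$ cannot be done in $\Ostar{(2-\epsilon)^n}$ time. The goal of the reduction is to transform a \textsc{Set Cover} instance into an equivalent \textsc{Set Partitioning} instance whose universe has size $n + o(n)$ (more precisely, $(1+\delta)n$ with $\delta$ that can be driven to $0$), so that a hypothetical $\Ostar{(2-\epsilon)^{n'}}$ algorithm for \textsc{Set Partitioning} on universe size $n'$ would yield an $\Ostar{(2-\epsilon')^n}$ algorithm for \textsc{Set Cover}, for some $\epsilon'>0$, contradicting the conjecture.

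First I would reduce covering to partitioning by replacing $\mathcal{F}$ with its downward (subset) closure $\mathcal{F}^{\downarrow}$. The point is that any cover can be trimmed into a partition of the same universe using only subsets of the chosen sets: processing the cover set by set and discarding already-covered elements turns it into a pairwise-disjoint subfamily whose union is $U$ and whose cardinality does not exceed that of the original cover. Conversely, a partition is a cover. Hence the minimum partition size over $\mathcal{F}^{\downarrow}$ equals the minimum cover size over $\mathcal{F}$, and when the original sets have bounded size the closure only inflates the family by a constant factor, keeping the construction polynomial.

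The remaining and genuinely delicate step is to turn the cardinality threshold ``at most $t$ sets'' into a pure exact-cover condition, since the \textsc{Set Partitioning} problem as stated carries no size bound and, on a subset-closed family, always admits the trivial partition into singletons. I would attach a budget gadget: a set $B$ of auxiliary ``token'' elements together with filler sets, arranged so that every partition must spend one token per genuine set it uses and so that at most $t$ tokens are available, while leftover tokens can be absorbed in bulk. An exact cover of $U \cup B$ then exists precisely when $U$ can be partitioned into at most $t$ pieces of $\mathcal{F}^{\downarrow}$, i.e.\ exactly when the original \textsc{Set Cover} instance is a yes-instance.

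The main obstacle is controlling the size of $B$. A naive one-token-per-set design needs $t \le n$ tokens and blows the universe up to $2n$, which destroys the base of the exponent --- an $\Ostar{(2-\epsilon)^{2n}}$ bound is useless against a $2^n$ target. The crux of the argument is therefore an economical encoding of the budget that contributes only $\delta n$ auxiliary elements for a parameter $\delta$ that can be chosen arbitrarily small (for instance by coarsening the budget into blocks and providing filler sets that consume unused capacity in groups), together with the quantitative check that, given a fixed $\epsilon$ for the assumed \textsc{Set Partitioning} algorithm, one can pick $\delta$ small enough that $(2-\epsilon)^{1+\delta} < 2$, yielding the required $\epsilon'>0$. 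Verifying correctness of the gadget and this tuning is where the real work lies; the covering-to-partition closure and the overall reduction skeleton are routine.
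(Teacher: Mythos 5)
First, note that the paper itself offers no proof of this statement: it is imported verbatim from \cite{Cygan2016} and used as a black box, so there is no in-paper argument to compare yours against. Judged on its own, your skeleton --- pass to the downward closure so that covers become partitions, then enforce the cardinality budget $t$ with a token gadget occupying only $\delta n$ extra universe elements, and finally tune $\delta$ so that $(2-\epsilon)^{(1+\delta)n} \le (2-\epsilon')^n$ --- is the right strategy and is essentially how the cited reference proceeds. The closure step and the observation that bounded set size keeps the closure polynomial are fine.

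The gap is that the one piece you defer, the economical budget gadget, is the entire technical content of the theorem, and the instantiation you hint at does not work as stated. ``Coarsening the budget into blocks'' presumably means introducing $q=\lceil\delta n\rceil$ tokens and letting each token be consumed together with a bundle of up to $r=\lceil t/q\rceil$ pairwise disjoint closure-sets (one new set $\{b_i\}\cup S_{j_1}\cup\dots\cup S_{j_s}$ with $s\le r$ per such bundle, the case $s=0$ supplying the filler $\{b_i\}$). But this only certifies that at most $q\cdot\lceil t/q\rceil$ original sets are used, which can exceed $t$ by up to $q-1$; a no-instance of \textsc{Set Cover} whose optimum is $t+1$ would then map to a yes-instance of \textsc{Set Partitioning}. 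Repairing this needs an extra step you do not mention, e.g.\ padding the \textsc{Set Cover} instance with at most $q$ fresh singleton elements and singleton sets so that $q$ divides $t$ exactly, which costs only another $\delta n$ universe elements. You also need to verify that the bundled family has size roughly $\bigl(m\,2^d\bigr)^{O(1/\delta)}$, which is polynomial only because the Set Cover Conjecture is formulated for sets of size bounded by a constant $d=d(\epsilon)$ --- this is where that hypothesis is genuinely used, not merely in the closure step. Until the gadget is pinned down at this level of detail, the proposal is a plan rather than a proof.
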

\begin{theoremO}
	For any $\varepsilon > 0$, \MHVfull~cannot be solved in time $\Ostar{(2-\varepsilon)^{n'}}$, where $n'$ is the number of uncolored vertices, unless the Set Cover Conjecture fails.
\end{theoremO}
\begin{proofO}
	The proof is by polynomial reduction preserving the size of the universe of the input instance of \textsc{Set Partitioning} in the number of uncolored vertices of the resulting instance of \MHVfull.
	
	Observe that \textsc{Set Partitioning} is a special case of the weighted version of the \textsc{Set Packing} problem: for each $j \in [m]$, assign weight $|S_j|$ to the set $S_j$ and ask to find a sequence of disjoint sets of summary weight at least $n$. With this observation, adjust the reduction in the proof of Theorem \ref{thm:mhv_nopolycomp} for this special weighted version of \textsc{Set Packing} by introducing $|S_j|$ copies of vertex $s_j$ to $G$ instead of just one.
	This yields the required reduction from \textsc{Set Partitioning} to \MHV.
\end{proofO}

\begin{theoremO}\label{thm:mhe_scc_lb}
	For any $\varepsilon > 0$, \MHEfull~cannot be solved in time $\Ostar{(2-\varepsilon)^{n'}}$, where $n'$ is the number of uncolored vertices, unless the Set Cover Conjecture fails.
\end{theoremO}
\begin{proofO}
	To prove this theorem we would also like to slightly adjust the reduction used to prove the lack of polynomial kernels (Theorem \ref{thm:mhe_nopolycomp}), but there we exploit the restricted version of the problem. The reduction appears possible though, but it is quite more sophisticated than the reduction in the proof of Theorem \ref{thm:mhe_nopolycomp}. We now describe this reduction from \textsc{Set Partitioning} preserving the size of the universe in the number of uncolored vertices in the resulting instance of \MHE. Throughout the proof, we refer to Theorem \ref{thm:mhe_nopolycomp} by saying that we do as usual.
	
	Firstly, we need to get rid of sets consisting of exactly two elements in the initial instance $(U=[n], \mathcal{F}=\{S_1, S_2, \ldots, S_m\})$ of \textsc{Set Partitioning}. We assume that $n>2$. Let $\mathcal{F}_{=2}$ be the subfamily of $\mathcal{F}$ of sets consisting of exactly two elements, i.e.\ $\mathcal{F}_{=2}=\{S_i \mid S_i \in \mathcal{F}, |S_i|=2\}$. We want that $\mathcal{F}_{=2}=\emptyset$. 
	
	Let it be the other case, $\mathcal{F}_{=2}\neq \emptyset$. Obtain an equivalent instance $(U, \mathcal{F}')$ with $\mathcal{F}'_{=2}=\emptyset$ as follows. Start with removing all sets in $\mathcal{F}_{=2}$ from $\mathcal{F}$. Then, for every pair of sets $S_i \in \mathcal{F}$ and $S_j \in \mathcal{F}_{=2}$, such that $S_i \cap S_j = \emptyset$, add $S_i \sqcup S_j$ back in $\mathcal{F}$. That is, return each set of $\mathcal{F}_{=2}$ back in $\mathcal{F}$, but as a union with some disjoint set in $\mathcal{F}$, for each such possible set. Also, for each triple of pairwise disjoint sets in $\mathcal{F}_{=2}$, add their union in $\mathcal{F}$. Formally, 
	\begin{equation}
	\begin{split}
		\mathcal{F}'=\mathcal{F}\setminus\mathcal{F}_{=2}  \cup  \{S_i \sqcup S_j \mid S_i \in \mathcal{F}, S_j \in \mathcal{F}_{=2}, S_i\cap S_j=\emptyset\} \\\cup\{S_i \sqcup S_j \sqcup S_k \mid S_i, S_j, S_k \in \mathcal{F}_{=2}\}.
	\end{split}
	\end{equation}
	Note that $\mathcal{F}'$ does not contain any set of size two and is constructed in polynomial time.
	
	\begin{claim}
		$(U,\mathcal{F})$ and $(U, \mathcal{F}')$ are equivalent instances of \textsc{Set Partitioning}.
	\end{claim}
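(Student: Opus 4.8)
The plan is to prove the two implications separately; the reverse direction is routine, while the forward direction needs a small case analysis on how many two-element sets the partition uses.

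For the easy direction, I would suppose $(U,\mathcal{F}')$ is a yes-instance with partition $\mathcal{P}'$. Every set of $\mathcal{P}'$ is either an original set of $\mathcal{F}\setminus\mathcal{F}_{=2}$, a union $S_i\sqcup S_j$ of two disjoint sets with $S_i\in\mathcal{F}$ and $S_j\in\mathcal{F}_{=2}$, or a union $S_i\sqcup S_j\sqcup S_k$ of three sets of $\mathcal{F}_{=2}$. Replacing each such union by its constituent sets, each of which lies in $\mathcal{F}$, turns $\mathcal{P}'$ into a collection of pairwise disjoint members of $\mathcal{F}$ that still covers $U$, hence a partition witnessing that $(U,\mathcal{F})$ is a yes-instance. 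Disjointness and coverage are clearly preserved by splitting.

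For the harder direction, I would start from a partition $\mathcal{P}$ of $(U,\mathcal{F})$ and split it as $\mathcal{P}=\mathcal{P}_{\neq 2}\cup\mathcal{P}_{=2}$, where $\mathcal{P}_{=2}=\mathcal{P}\cap\mathcal{F}_{=2}$ collects the two-element sets used and $t=|\mathcal{P}_{=2}|$. The sets in $\mathcal{P}_{\neq 2}$ already belong to $\mathcal{F}\setminus\mathcal{F}_{=2}\subseteq\mathcal{F}'$, so only the two-element sets must be regrouped. Since the members of $\mathcal{P}_{=2}$ are pairwise disjoint, any two or three of them merge into a set of $\mathcal{F}'$, by the second and third parts of the definition of $\mathcal{F}'$ respectively. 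I would then argue by cases on $t$: if $t$ is even, merge the two-element sets into pairs; if $t\ge 3$ is odd, first merge three of them into one triple and then pair the remaining (even number of) sets; and if $t=0$, do nothing. In each case the resulting family lies in $\mathcal{F}'$, is pairwise disjoint, and covers the same elements as $\mathcal{P}_{=2}$, so together with $\mathcal{P}_{\neq 2}$ it partitions $U$ using only sets of $\mathcal{F}'$.

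The one case that needs the standing assumption $n>2$ is $t=1$, and this is where I expect the only real obstacle to lie. A single two-element set cannot be paired or tripled with another member of $\mathcal{P}_{=2}$, so I would instead merge it with some set $S_i\in\mathcal{P}_{\neq 2}$, forming a set $S_i\sqcup S_j\in\mathcal{F}'$ with $S_i\in\mathcal{F}$ and $S_j\in\mathcal{F}_{=2}$. Such a partner must exist precisely because $n>2$: the lone two-element set covers only two of the $n$ elements, so $\mathcal{P}$ contains at least one further set, and since $t=1$ that set is not of size two. Verifying that $t=1$ is the only place where parity or emptiness can fail, and that every merge respects disjointness, is the bulk of the care required; the remainder is bookkeeping about coverage.
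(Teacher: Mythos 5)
Your proof is correct and follows essentially the same route as the paper's: the reverse direction by splitting each union back into its constituent members of $\mathcal{F}$, and the forward direction by a parity case analysis on the number $t$ of two-element sets used, handling $t$ even by pairing, odd $t\ge 3$ by one triple plus pairs, and the exceptional case $t=1$ by merging the lone two-element set with a larger set of the partition, whose existence is guaranteed by $n>2$. No substantive differences to report.
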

	\begin{claimproof}
		Let $S_{i_1}, S_{i_2}, \ldots, S_{i_k}$ be the answer to $(U, \mathcal{F})$. If the sequence does not contain sets of size two, then all these sets are contained in $\mathcal{F}'$, hence it is also an answer to $(U, \mathcal{F}')$. Otherwise, there is at least one set of size two in the sequence. Let $S_{i_1}, \ldots, S_{i_{k'}}$ be all sets of size two in the sequence, and $S_{i_{k'+1}}, \ldots, S_{i_k}$ be all other sets in the sequence, $k' \ge 1$. If $k'$ is even, then $S_{i_1}\sqcup S_{i_2}, \ldots, S_{i_{k'-1}} \sqcup S_{i_{k'}}, S_{i_{k'+1}}, \ldots, S_{i_k}$ is an answer to $(U, \mathcal{F}')$. If $k'=1$, then $k\ge 2$, and $S_{i_1} \sqcup S_{i_2}, S_{i_3}, \ldots, S_{i_k}$ is the answer. Otherwise, $k'$ is odd and $k'\ge3$, and $S_{i_1}\sqcup S_{i_2}\sqcup S_{i_3},S_{i_4}\sqcup S_{i_5},\ldots, S_{i_{k'-1}}\sqcup S_{i_{k'}}, S_{i_{k'+1}}, \ldots, S_{i_k}$ is an answer to $(U, \mathcal{F}')$. That is, if $(U, \mathcal{F})$ is a yes-instance, then $(U, \mathcal{F}')$ is a yes-instance.
		
		Proof in the other direction is trivial, since each set of $\mathcal{F}'$ is a disjoint union of a number of sets in $\mathcal{F}$. Thus, the instances are equivalent.
	\end{claimproof}

	We now assume that $\mathcal{F}$ no longer contains any set of size two. We show how to reduce the instance $(U, \mathcal{F})$ of \textsc{Set Partitioning} to an equivalent instance $(G, p, k')$ of \MHE. As usual, introduce a clique on $n$ vertices $u_1, u_2, \ldots, u_n$ in $G$, which vertices correspond to the elements of $U$. For each set $S_j \in \mathcal{F}$, do the following. Let $d=|S_j|$. Introduce $n^2-\lfloor \frac{d-1}{2} \rfloor$ copies of a vertex $s_j$ to $G$. If $d$ is odd, for each $i \in S_j$, add edges between $u_i$ and all $n^2-\frac{d-1}{2}$ copies of $s_j$. If $d$ is even, then divide $S_j$ into two equal parts arbitrarily, say, $S_j=S^1_j\sqcup S^2_j$, $|S^1_j|=|S^2_j|=\frac{d}{2}$. For each $i \in S^1_j$, add edges between $u_i$ and all $n^2-\frac{d}{2}+1$ copies of $s_j$. But for each $i \in S^2_j$, add edges between $u_i$ and all except one copy of $s_j$, that is, connect $u_i$ to only $n^2-\frac{d}{2}$ copies of $s_j$.
	
	As usual, precolor all copies of $s_j$ with color $j$ for each $j \in [m]$, and leave every vertex $u_i$ uncolored. Finally, set $k'=n^3$. We argue that the constructed instance $(G, p, k')$ is a yes-instance if and only if $(U, \mathcal{F})$ is a yes-instance.
	
	To prove in one direction, let $(U, \mathcal{F})$ be a yes-instance and $S_{i_1}, S_{i_2}, \ldots, S_{i_k}$ be the answer sequence. Construct coloring $c$ extending $p$ as usual, by setting $c(u_i)$ equal to the index of the unique set of the answer that contains $i$, i.e.\ $i \in S_{c(u_i)}$, for each $i \in [n]$. Observe that $c$ yields exactly $n^3$ happy edges in $G$. $c$ splits the clique of $G$ into $k$ groups $V_{i_1}, V_{i_2}, \ldots, V_{i_k}$, and each group contains vertices corresponding to the elements of the respective set, i.e.\ $V_{i_t} = \{u_i \mid i\in S_{i_t}\}$ for each $t \in [k]$. Take any group $V_{i_t}$, and let $d=|V_{i_t}|=|S_{i_t}|$. If $d$ is odd, then exactly $d \cdot (n^2+\frac{d-1}{2})=dn^2+\binom{d}{2}$ edges that have exactly one endpoint in $V_{i_t}$ are happy. If $d$ is even, the number of such happy edges equals $\frac{d}{2}\cdot (n^2-\frac{d}{2}+1)+\frac{d}{2}\cdot(n^2-\frac{d}{2})=dn^2-\binom{d}{2}$ as well. The only other edges are $\binom{d}{2}$ edges inside $V_{i_t}$. All vertices in $V_{i_t}$ are of color $i_t$, hence all these $\binom{d}{2}$ edges are happy. Thus, vertices in $V_{i_t}$ are incident to exactly $|V_{i_t}|\cdot n^2$ happy edges. Since no two groups can share endpoints of the same happy edge, we get a total of $\sum |V_{i_t}|\cdot n^2=n^3=k'$ happy edges in $G$ with respect to $c$. Hence, $(G, p, k')$ is a yes-instance of \MHE.
	
	We start the proof in the other direction with a claim identical to Claim \ref{claim:mhe_polycomp_req_cols}.
	
	\begin{claim}\label{claim:mhe_seth_req_cols}
		In any optimal coloring $c$ of $G$ extending $p$, $i \in S_{c(u_i)}$ for each $i \in [n]$.
	\end{claim}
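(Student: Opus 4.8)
The plan is to reuse the exchange argument from the proof of Claim~\ref{claim:mhe_polycomp_req_cols} almost verbatim, adapting only the edge counts to the new multiplicities of the $s$-vertices. I would argue by contradiction: suppose $c$ is optimal, yet $i \notin S_{c(u_i)}$ for some $i \in [n]$. By construction, $u_i$ is adjacent to copies of $s_j$ only when $i \in S_j$; since $i \notin S_{c(u_i)}$, no edge from $u_i$ to an $s$-vertex is happy under $c$. Hence every happy edge incident to $u_i$ is a clique edge, so $u_i$ is incident to at most $n-1$ happy edges.

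Next I would exhibit a strictly better coloring by a local recoloring. Pick any $j$ with $i \in S_j$ (such a $j$ exists because every element of $[n]$ is covered) and set $c(u_i)=j$, leaving all other vertices unchanged. Writing $d=|S_j|$, the vertex $u_i$ is joined to at least $n^2 - d/2$ copies of $s_j$: this worst case occurs when $d$ is even and $i \in S^2_j$, while in the remaining cases the adjacency is $n^2 - (d-1)/2$ or $n^2 - d/2 + 1$, which is larger. All these copies are precolored with $j=c(u_i)$, so after recoloring $u_i$ is incident to at least $n^2 - d/2 \ge n^2 - n/2$ happy edges, using $d \le n$. Because the recoloring changes no other vertex's color, it affects only the happiness of edges incident to $u_i$, leaving all other happy edges intact. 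The total number of happy edges thus grows by at least $(n^2 - n/2) - (n-1) = n^2 - \tfrac{3}{2}n + 1 > 0$, the positivity holding under the standing assumption $n>2$; this contradicts the optimality of $c$.

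The main, and essentially only, obstacle is the case analysis on the multiplicities, which is absent in Claim~\ref{claim:mhe_polycomp_req_cols}: here the number of copies of $s_j$ adjacent to $u_i$ depends on the parity of $|S_j|$ and on whether $i$ falls into $S^1_j$ or $S^2_j$. I expect all the care to lie in identifying the smallest of these quantities, $n^2 - d/2$, and confirming that it still dominates the at most $n-1$ happy clique edges; the rest of the argument is identical to the earlier proof.
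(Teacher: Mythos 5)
Your proof is correct and follows essentially the same route as the paper, which compresses the argument to the key inequality $n^2-\lfloor\frac{n-1}{2}\rfloor>n-1$ for $n>2$ and a reference to the exchange argument of Claim~\ref{claim:mhe_polycomp_req_cols}. Your explicit identification of the worst-case adjacency $n^2-d/2$ (even $d$, $i\in S^2_j$) and the comparison against the at most $n-1$ clique edges is exactly the intended calculation.
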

	\begin{claimproof}
		Note that $n^2-\lfloor\frac{n-1}{2}\rfloor>n-1$ for $n>2$, then as usual.
	\end{claimproof}

	The following claim bounds the number of happy edges incident to vertices in each color group of the clique in $G$.
	
	\begin{claim}\label{claim:mhe_seth_main_claim}
		Take any optimal coloring $c$ of $(G, p)$. Let $V_t=c^{-1}(t)\cap \{u_1, u_2, \ldots,$ $ u_n\}$ be the set of vertices of the universe clique of $G$ that are colored with color $t$. Let $d=|V_t|$. Then vertices in $V_t$ correspond to elements in $S_t$, are incident to at most $dn^2$ happy edges, and the bound of $dn^2$ is reached if and only if $d=|S_t|$ or $d=0$.
	\end{claim}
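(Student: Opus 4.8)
The plan is to compute directly the number of happy edges incident to $V_t$ and compare it with $dn^2$, treating the two parities of $|S_t|$ separately. First I would invoke Claim \ref{claim:mhe_seth_req_cols}: since $c$ is optimal, $i \in S_{c(u_i)}$ holds for every $i$, so each $u_i \in V_t$ satisfies $i \in S_t$. This gives the first assertion at once, and also the inequality $d = |V_t| \le |S_t|$. Write $D = |S_t|$ for brevity.

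Next I would classify the happy edges touching $V_t$. A clique edge $(u_i,u_{i'})$ is happy and incident to $V_t$ precisely when both endpoints lie in $V_t$, contributing $\binom{d}{2}$ edges; a clique edge leaving $V_t$ joins two differently coloured vertices and is unhappy. An edge from some $u_i \in V_t$ to a copy of $s_j$ is happy only when $j=t$, because the copies of $s_j$ are precoloured with colour $j$, and the copies are adjacent to no other vertices. Hence the count equals $\binom{d}{2}$ plus the number of edges from $V_t$ to the copies of $s_t$, and only the latter needs care.

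I would then follow the construction in the two cases. If $D$ is odd, every $u_i$ with $i \in S_t$ is joined to all $n^2 - \tfrac{D-1}{2}$ copies of $s_t$, so the total is
\[
\binom{d}{2} + d\Bigl(n^2 - \tfrac{D-1}{2}\Bigr) = dn^2 - \tfrac{d(D-d)}{2},
\]
which is at most $dn^2$ since $d \le D$, with equality exactly when $d(D-d)=0$, i.e.\ $d\in\{0,D\}$. If $D$ is even, I would let $a$ be the number of vertices of $V_t$ lying in the part $S_t^1$ (so $d-a$ lie in $S_t^2$); the respective copy-degrees $n^2-\tfrac D2+1$ and $n^2-\tfrac D2$ yield a total of $dn^2 + \tfrac{d(d-D-1)}{2} + a$. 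As this is strictly increasing in $a$, I would maximise it over the admissible range $a\le\min(d,\tfrac D2)$ with $d-a\le\tfrac D2$, obtaining $dn^2 + \tfrac{d(d-D+1)}{2}$ when $d\le\tfrac D2$ and $dn^2 + \tfrac{(d-1)(d-D)}{2}$ when $d\ge\tfrac D2$ (the second after the factorisation $d^2-dD-d+D=(d-1)(d-D)$).

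The main obstacle, and the only genuinely delicate point, is this even case: I expect to need that the instance contains no set of size two, so that $D\ge 4$ whenever $D$ is even. This forces $d<D-1$ in the first subrange (using $\tfrac D2\le D-2$) and $d-1>0$ in the second, making both correction terms strictly negative for $0<d<D$ and zero precisely at $d=0$ and $d=D$. Without excluding $D=2$ the characterisation fails, since for $D=2$ a single vertex of $V_t$ taken from $S_t^1$ already attains $n^2=dn^2$ happy edges with $d=1\notin\{0,D\}$. Combining the odd and even cases then gives the bound $dn^2$ together with the stated equality condition $d\in\{0,|S_t|\}$.
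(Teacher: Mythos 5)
Your proposal is correct and follows essentially the same route as the paper's proof: invoke Claim~\ref{claim:mhe_seth_req_cols} to place $V_t$ inside $S_t$, split on the parity of $|S_t|$, compute $dn^2-\tfrac{d(D-d)}{2}$ in the odd case, and in the even case reduce to the sign of $(d-1)(d-D)$ using $D\ge 4$. The only difference is organizational --- you maximize over the split parameter $a=d_1$ directly, whereas the paper assumes the count reaches $dn^2$ and derives a contradiction via the same factorization --- and your observation that $D=2$ would break the equality characterization matches the remark at the end of the paper's claimproof.
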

	\begin{claimproof}
		Take any optimal coloring $c$ of $(G, p)$ and $t \in [\ell]$. Let $d=|V_t|$. If $d=0$, then the claim statement holds true. Assume now that $d>0$ and $|V_t|$ is not empty.
		
		By Claim \ref{claim:mhe_seth_req_cols}, all vertices in $V_t$ correspond to elements in $S_t$. Let $D=|S_t|$. Suppose $D$ is odd. Then each vertex in $V_t$ is connected to $n^2-\frac{D-1}{2}$ copies of $s_t$. Hence, exactly $dn^2-d\cdot(D-1)/2$ edges going outside of $V_t$ are happy with respect to $c$. All edges inside $V_t$ are happy, so $V_t$ is incident to $dn^2-d\cdot(D-1)/2+\binom{d}{2}$ happy edges. Since $d \leq D$, this number is not greater than $dn^2$ and equals $dn^2$ if and only if $d=D=|S_t|$. Thus, the claim statement is true for odd $D$.
		
		Otherwise, $D$ is even and $D \ge 4$, since we removed all sets of size two from the initial instance. Recall that for sets of even size, we split them into two halves and connected them to a different number of copies of $s_t$. Let $d=d_1+d_2$ ($d_1, d_2\le \frac{D}{2}$), so $d_1$ vertices in $V_t$ are connected to $n^2-\frac{D}{2}+1$ copies of $s_t$, and the other $d_2$ vertices in $V_t$ are connected to $n^2-\frac{D}{2}$ copies of $s_t$. Then, $V_t$ is incident to exactly $n^2-d\cdot D/2+d_1$ happy edges going outside of $V_t$. Only happy edges left are the $\binom{d}{2}$ edges inside $V_t$.
		
		Suppose that vertices in $V_t$ are incident to at least $dn^2$ happy edges in total, so $d\cdot D/2 - d_1 \le \binom{d}{2}$. Equivalently, $d\cdot (D-(d-1)) \le 2d_1$. Since $2d_1 \le D$, we get that $d\cdot (D-d+1) \le D$, or $d^2-Dd+(D-d) \ge 0$. The quadratic polynomial has roots in $d=1$ and $d=D$, and the inequality holds when $d\le 1$ or $d\ge D$. If $d\le1$, then $d=1$. But then $d \cdot D/2 - d_1 =D/2 - d_1 \ge 1 > \binom{d}{2}=0$. Thus, $d=D=|S_t|$.
		
		The proof is finished. Note that if we would have $D=2$, $V_t$ could consist of one vertex connected to $n^2-\frac{D}{2}+1=n^2$ copies of $s_t$, and the claim statement would fail.
	\end{claimproof}

	Let now $(G, p, k')$ be a yes-instance, and $c$ be an optimal coloring of $(G,p)$. At least $n^3$ edges are happy with respect to $c$ in $G$. By Claim \ref{claim:mhe_seth_main_claim}, it follows that exactly $n^3$ edges are happy in $G$ with respect to $c$, and each color group $V_t$ provides exactly $|V_t|\cdot n^2$ happy edges. Hence, $|V_t|=|S_t|$ for each non-empty group $V_t$. Thus, the color partition of the vertices of the universe clique of $G$ corresponds to a partition of the universe $U$ into sets in $\mathcal{F}$. So $(U, \mathcal{F})$ is a yes-instance of \textsc{Set Partitioning}. 
	We obtained the desired reduction from \textsc{Set Partitioning} to \MHE.
	This finishes the whole proof.
\end{proofO}

We now turn onto ETH-based lower bounds.

\begin{theoremO}
	\MHVfull~with $\ell=3$ cannot be solved in time $2^{o(n+m)}$, unless ETH fails.
\end{theoremO}

\begin{proofO}
	We reuse reductions discussed above in the proofs in Section \ref{sec:above-guarantee}.
	In the proofs of Lemma \ref{lemma:w_max_2_sat_npc} and Lemma \ref{lemma:is_above_coloring_npcomp}, we obtained a chain of linear reductions from \textsc{3-SAT} to \textsc{Independent Set Above Coloring}.
	Then, in the proof of Theorem \ref{thm:mhv_above_guar} we showed that \textsc{Independent Set Above Coloring} can be reduced linearly to \textsc{Above Guarantee Happy Vertices}.
	Since this problem is a special case of \MHV, it follows that \MHV~cannot be solved in $2^{o(n+m)}$ time under ETH.
\end{proofO}

We now prove another computational lower bound for \MHV~that is based on the reduction from \textsc{Independent Set} to \MHV~discussed above in the proofs of Theorem \ref{thm:mhv_kernel_bitsize} and Theorem \ref{thm:mhv_above_guar}.
This reduction also implies some approximation lower bounds.

\begin{theoremO}\label{thm:mhv_clique_eth}
	\MHVfull~cannot be solved in $\O(n^{o(k)})$ time, unless ETH fails.
	Also, for any $\epsilon>0$, \MHVfull~cannot be approximated within $\O(n^{\frac{1}{2}-\epsilon})$, $\O(m^{\frac{1}{2}-\epsilon})$, $\O(h^{1-\epsilon})$ or $\O(\ell^{1-\epsilon})$ in polynomial time, unless $\P=\NP$.
\end{theoremO}
\begin{proofO}
	It is a well-known result that, assuming ETH, both \textsc{Clique} and \textsc{Independent Set} cannot be solved in $n^{o(k)}$ time \cite{Chen2005,Chen2006,lokshtanov2013lower}.
	As discussed in the proofs of Theorem \ref{thm:mhv_kernel_bitsize} and Theorem \ref{thm:mhv_above_guar}, there is a polynomial reduction from \textsc{Independent Set} to \MHV.
	
	Given an instance $(G,k)$ of \textsc{Independent Set}, it is enough to precolor each vertex of $G$ with an unique color and to subdivide each edge of $G$, thus obtaining an equivalent instance $(G',p,k)$ of \MHV.
	$G'$ is a subdivision of $G$, so $|V(G')|=\O(n^2)$, where $n=|V(G)|$.
	Thus, an algorithm with running time $n^{o(k)}$ for \MHV~would imply an algorithm with running time $n^{2o(k)}=n^{o(k)}$ for \textsc{Independent Set}.
	This proves that \MHV~cannot be solved in $n^{o(k)}$ time under ETH.
	
	The approximation guarantee lower bounds for \MHV~follow from the inapproximability of \textsc{Clique} and \textsc{Independent Set}, as both of these problems cannot be approximated within $\O(n^{1-\epsilon})$, unless $\P\neq \NP$ \cite{Zuckerman2006}.
	We use the same reduction from \textsc{Independent Set} to \MHV, and note that $|E(G')|=\O(n^2)$ and $h=\ell=|V(G)|=n$.
	This finishes the proof.
\end{proofO}

\begin{theoremO}\label{thm:mhe_eth_lb}
	\MHEfull~with $\ell=3$ cannot be solved in time $2^{o(n+m)}$, unless ETH fails.
\end{theoremO}
\begin{proofO}
	In their work on multiterminal cuts \cite{Dahlhaus1994}, Dahlhahus et al.\ showed \NP-completeness of \textsc{3-Terminal Cut} (equivalently, \MHEfull~with $\ell=3$) by a linear reduction from the \textsc{Max Cut} problem.
	The \textsc{Max Cut} problem definition is given below.
	
	\begin{problemx}
		\problemtitle{\textsc{Max Cut}}
		\probleminput{A graph $G$ and an integer $k$.}
		\problemquestion{Can vertices of $G$ be partitioned into two sets $V(G)=V_1 \sqcup V_2$, so that the number of edges between $V_1$ and $V_2$ in $G$ is at least $k$, i.e.\ $|E(V_1, V_2)|\ge k$?}
	\end{problemx}
	
	The reduction they give is linear, so it is sufficient to prove that \textsc{Max Cut} cannot be solved in $2^{o(n+m)}$ time, unless ETH fails.
	Although this result may be well-known, we have not found any explicit statement about that \textsc{Max Cut} cannot be solved in subexponential time.
	For completeness, we state it here.
	
	\begin{lemma}\label{lemma:max_cut_eth}
		\textsc{Max Cut} cannot be solved in time $2^{o(n+m)}$, unless ETH fails.
	\end{lemma}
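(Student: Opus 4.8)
The plan is to reduce from a problem already known to have no $2^{o(n+m)}$ algorithm under ETH, and to make the reduction \emph{linear} in both the number of vertices and the number of edges. The standard starting point is \textsc{3-SAT} (or its sparse variant), for which the Sparsification Lemma gives the statement that there is no $2^{o(n+m)}$ algorithm under ETH, where $n$ is the number of variables and $m$ the number of clauses. However, the cleanest route for \textsc{Max Cut} is to invoke the known \NP-hardness reduction from a graph problem whose ETH lower bound is already established with a linear parameter blow-up. In particular, \textsc{Max Cut} is \NP-hard even on graphs of bounded degree, and the classical reduction (e.g.\ from \textsc{3-SAT} or from \textsc{NAE-3-SAT}) produces a \textsc{Max Cut} instance whose size is linear in the size of the input formula. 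So first I would fix such a linear reduction $\mathcal{A}$ from \textsc{3-SAT} to \textsc{Max Cut}.

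First I would recall that, by the Sparsification Lemma, \textsc{3-SAT} admits no $2^{o(n+m)}$ algorithm unless ETH fails, and that we may assume the input formula has $m = \O(n)$ clauses. Next I would describe the reduction $\mathcal{A}$: it takes a \textsc{3-SAT} formula $\phi$ with $n$ variables and $m$ clauses and outputs a graph $G$ together with an integer $k$, such that $\phi$ is satisfiable if and only if $G$ has a cut of size at least $k$. The key property to verify is that $|V(G)| = \O(n+m)$ and $|E(G)| = \O(n+m)$, i.e.\ the reduction is linear in both parameters simultaneously. Because the reduction runs in polynomial time and produces an instance of linearly bounded size, a hypothetical $2^{o(|V(G)|+|E(G)|)}$ algorithm for \textsc{Max Cut} would, composed with $\mathcal{A}$, yield a $2^{o(n+m)}$ algorithm for \textsc{3-SAT}, contradicting ETH.

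The main obstacle is ensuring the blow-up is linear in \emph{both} $n$ and $m$ at once, rather than merely polynomial or linear in one parameter only. A reduction that is linear in the number of vertices but blows up the number of edges (for instance by introducing large cliques or high-degree gadgets) would be insufficient, since the target lower bound is stated in terms of $n+m$ of the \textsc{Max Cut} instance. The remedy is to use a gadget-based reduction in which each variable and each clause is replaced by a constant-size gadget of bounded degree, so that both the vertex count and the edge count grow only by a constant factor relative to $n+m$ of the formula. With bounded-degree gadgets we automatically have $|E(G)| = \O(|V(G)|)$, which makes the edge bound follow from the vertex bound. I would close the argument by noting that the correctness of the gadget reduction (that the maximum cut value encodes satisfiability through an appropriate threshold $k$) is classical, and that the linear size bound together with the \textsc{3-SAT} lower bound gives the claimed $2^{o(n+m)}$ lower bound for \textsc{Max Cut}, completing the proof of the lemma.
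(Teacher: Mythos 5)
Your proposal is correct and follows essentially the same strategy as the paper: start from the $2^{o(n+m)}$ ETH lower bound for \textsc{3-SAT} (via the Sparsification Lemma) and transfer it to \textsc{Max Cut} through classical reductions that are linear in both vertices and edges. The only cosmetic difference is that the paper explicitly follows the Papadimitriou--Yannakakis chain (\textsc{Max 3-SAT} with bounded occurrences, \textsc{Independent Set}, \textsc{Max 2-SAT}, \textsc{Max NAE-3-SAT}, \textsc{Max Cut}), whereas you propose a single bounded-degree gadget reduction; both yield the same linear size bound and hence the same conclusion.
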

	\begin{proof}
		This result can be obtained by following the series of classical reductions by Papadimitriou and Yannakakis in \cite{Papadimitriou1991}.
		They reduce an instance of \textsc{Max 3-SAT} to an instance of \textsc{Max 3-SAT} that contains at most three occurences of each variable, then to \textsc{Independent Set}, \textsc{Max 2-SAT}, then to \textsc{Max Not-All-Equal 3-SAT} and finally to \textsc{Max Cut}.
		All reductions they provide are linear, so no problem in this chain can be solved in subexponential time.
	\end{proof}

	We finish the proof by combining Lemma \ref{lemma:max_cut_eth} and the linear reduction in \cite{Dahlhaus1994} from \textsc{Max Cut} to \textsc{3-Terminal Cut}.
\end{proofO}
	\section{Algorithms}

In this section, we present two algorithms solving \MHV~or \MHE.
We start with a randomized algorithm for \MHV~that runs in $\Ostar{\ell^k}$ time and recognizes a yes-instance and finds the required coloring with a constant probability.
The algorithm is based on the following lemma.

\begin{lemmaO}\label{lemma:precolored_high_prob}
	Let $(G, p)$ be a graph with precoloring, and $P=\bigcup\limits_{i=1}^\ell \mathcal{H}_i(G,p)$. Let $c$ be a coloring that yields the maximum possible number of happy vertices in $(G,p)$, and let $H=\mathcal{H}(G,c)$ be the set of these vertices. Then $|H \cap P|\ge \frac{1}{\ell}\cdot |P|$.
\end{lemmaO}
\begin{proofO}
Let $k$ be the maximum possible number of vertices that can be happy simultaneously in $(G,p)$, so $(G,p,k)$ is a yes-instance of \MHV, and $(G,p,k+1)$ is not. Let $i$ be such that $|\mathcal{H}_t(G,p)|$ is maximum possible. In particular, $|\mathcal{H}_t(G,p)|\ge\frac{1}{\ell}\cdot |P|$. Construct a coloring $c'$ of $(G,p)$ by trivially extending $p$ with the color $t$. Note that all vertices that are not happy in $G$ with respect to $c'$ are contained in $\bigcup\limits_{i \in [\ell]\setminus \{t\}} \mathcal{H}_i(G,p)=P\setminus\mathcal{H}_t(G,p)$. 

Let $U=\mathcal{H}(G,p)\setminus P$. All vertices in $U$ and all vertices in $\mathcal{H}_t(G,p)$ are happy in $G$ with respect to $c'$. Let $H'=\mathcal{H}(G, c')$ be the set of all vertices that are happy in $G$ with respect to $c'$. Then $U \cap \mathcal{H}_t(G,p)\subseteq H'$, and $|H'|=|H'\cap U|+|H'\cap P|\ge |U|+\frac{1}{\ell}\cdot |P|$.

Take now a coloring $c$ that yields the maximum possible number of happy vertices in $(G,p)$, and let $H=\mathcal{H}(G,c)$ be the set of these vertices. In particular, $|H|\ge |H'|$. Suppose $|H \cap P| < \frac{1}{\ell}\cdot |P|$. But $|H|=|H\cap U| + |H\cap P|< |U|+\frac{1}{\ell} \cdot |P|\le|H'|$. This contradiction finishes the proof.
\end{proofO}

\begin{theoremO}\label{thm:mhv_lk_randomized_exact}
	There is a $\Ostar{\ell^k}$ running time randomized algorithm for \MHVfull.
\end{theoremO}
\begin{proofO}
Firstly, we provide a procedure that finds an answer for a given instance $(G,p,k)$ of $\ell$-\MHV~with a probability of at least $\ell^{-k}$, if $(G,p,k)$ is a yes-instance. The procedure is given in Fig.\ \ref{fig:algo_guess_answer}.

\begin{figure}[!ht]
	\centering
	\begin{algorithm}[H]
		\LinesNumbered
		\TitleOfAlgo{$\texttt{guess\_answer}(G, p, k)$}
		\KwIn{An instance $(G,p,k)$ of $\ell$-\MHV.}
		\KwOut{A set $H \subseteq \mathcal{H}(G, p)$ such that all vertices in $H$ can be happy in $(G, p)$ simultaneously.}
		\BlankLine
		\SetAlgoVlined
		\DontPrintSemicolon
		$P \longleftarrow \bigcup\limits_{i=1}^\ell \mathcal{H}_i(G, p)$\;
		$U \longleftarrow \mathcal{H}(G,p)\setminus P$\;
		\If{$k \le |U|$}{
			\Return $U$\nllabel{line:guess_answer_u}\;
		}
		\If{$P=\emptyset$}{
			\Return $\emptyset$\nllabel{line:guess_answer_empty}\;
		}
		
		$v \longleftarrow $ random vertex in $P$, each with equal probability\nllabel{line:guess_answer_guess_v}\;
		$p' \longleftarrow \left.p\right|_{V(G)\setminus \{v\}}$\;
		$i \longleftarrow $ the color such that $v \in \mathcal{H}_i(G,p)$\;
		\ForEach{$u \in N(v)$}{
			$p'(u) \longleftarrow i$\;
		}
		\Return $\texttt{guess\_answer}(G\setminus \{v\}, p', k-1) \cup \{v\}$\nllabel{line:guess_answer_induction}\;
	\end{algorithm}
	\caption{A randomized procedure finding a set of vertices that can be happy simultaneously.}\label{fig:algo_guess_answer}
\end{figure}

\begin{claim}\label{claim:mhv_randomized_correct}
	$\texttt{guess\_answer}(G,p,k)$ always outputs a set of vertices that can be happy simultaneously in $(G,p)$.
\end{claim}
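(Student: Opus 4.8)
The plan is to argue by induction on $|V(G)|$, which is well-founded because each recursive call removes the vertex $v$ from the graph. There are two base cases, corresponding to the two early returns, and one inductive step, corresponding to the recursive call on line~\ref{line:guess_answer_induction}.

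First I would handle the base cases. When $P=\emptyset$ the procedure returns $\emptyset$, which is vacuously a set of simultaneously happy vertices. The substantial base case is when $k\le |U|$ and the procedure returns $U=\mathcal{H}(G,p)\setminus P$. Here the key observation is that every $v\in U$ satisfies $N[v]\cap p^{-1}(i)=\emptyset$ for all $i$, i.e.\ neither $v$ nor any of its neighbours is precolored. I would then form the auxiliary graph whose edges are all pairs $\{v,u\}$ with $v\in U$ and $u\in N(v)$, take its connected components, and observe that each such component consists solely of uncolored vertices. Coloring every component monochromatically, and coloring the remaining vertices consistently with $p$, yields an extension $c$ of $p$ in which every $v\in U$ has all of $N(v)$ in its own component, hence colored like $v$; so every vertex of $U$ is happy simultaneously.

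For the inductive step I would use that $v$ lies in $P$, so there is a color $i$ with $v\in\mathcal{H}_i(G,p)$, and this $i$ is unique: two distinct such colors would force two precolored vertices of different colors in $N[v]$, contradicting $v\in\mathcal{H}(G,p)$. The decisive structural fact is that for $v\in\mathcal{H}_i(G,p)$ every precolored vertex of $N[v]$ is colored $i$; in particular overwriting $p'(u)=i$ on $N(v)$ never conflicts with an existing color of $p$, and $p(v)=i$ whenever $v$ is precolored. By the induction hypothesis the recursive call returns a set $H'$ that is simultaneously happy in $(G\setminus\{v\},p')$, witnessed by some extension $c'$ of $p'$. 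I would extend $c'$ to a coloring $c$ of $(G,p)$ by setting $c(v)=i$. Using the fact above one checks that $c$ extends $p$; that $v$ is happy because every $u\in N(v)$ has $c(u)=p'(u)=i=c(v)$; and that each $w\in H'$ remains happy, since its neighbours in $G\setminus\{v\}$ are unchanged and, should $v$ be a neighbour of $w$, then $w\in N(v)$ forces $c(w)=p'(w)=i=c(v)$, so the new edge $\{w,v\}$ is happy as well. Hence $H'\cup\{v\}$ is simultaneously happy, as required.

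I expect the main obstacle to be the base case $k\le|U|$: one must verify that the whole of $U$ can be made happy at once, which is exactly where the definition of $U$ (no precolored vertex in any closed neighbourhood of a $U$-vertex) is needed to guarantee that the monochromatic components never clash with the given precoloring $p$. The inductive step is then essentially bookkeeping, its only delicate points being the uniqueness of $i$ and the observation that all precolored neighbours of $v$ already carry color $i$, which together make the overwrite $p'(u)\leftarrow i$ harmless.
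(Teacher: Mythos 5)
Your proof is correct and follows essentially the same route as the paper's: induction over the recursion (the paper inducts on $k$, you on $|V(G)|$, which is immaterial), using the same two key observations — that every $v \in U$ has an entirely uncolored closed neighbourhood, and that the overwrite $p'(u) \leftarrow i$ on $N(v)$ is consistent with $p$ because $v$ is potentially happy and lies in $\mathcal{H}_i(G,p)$. The only stylistic difference is your base case, which can be simplified as in the paper: since no vertex of $N[v]$ is precolored for $v \in U$, any trivial extension of $p$ (a single color for all uncolored vertices) already makes all of $U$ happy, so the connected-component construction is unnecessary.
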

\begin{claimproof}
	We prove that by induction on $k$. For $k=0$, the procedure returns $U$ in line \ref{line:guess_answer_u}, and all vertices in this set can be happy simultaneously in $(G,p)$ (for example, with any trivial extension of $p$). Let now $k>0$ and the claim statement hold for $k-1$. Consider in which line the procedure returns for $(G,p,k)$. If it is line \ref{line:guess_answer_u} or line \ref{line:guess_answer_empty}, then the claim statement holds. Consider the only case left, when procedure returns in line \ref{line:guess_answer_induction}.
	
	Let $H'$ be the set returned by $\texttt{guess\_answer}(G\setminus\{v\}, p', k-1)$. By induction, all vertices in $H'$ can be happy simultaneously in $(G\setminus\{v\}, p')$. Note that all vertices in $H'\cup\{v\}$ can be happy simultaneously in $(G,p)$ as well, since $p'$ is just a restriction of $p$ that ensures that all neighbours of $v$ are colored with the same color as $v$ itself. The claim statement follows immediately.
\end{claimproof}

\begin{claim}
	For any yes-instance $(G,p,k)$ of $\ell$-\MHV, the $\texttt{guess\_answer}$ procedure outputs a set $H$ with $|H|\ge k$ with a probability of at least $\ell^{-k}$. 
\end{claim}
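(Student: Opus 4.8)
The plan is to prove the claim by induction on $k$, using Lemma~\ref{lemma:precolored_high_prob} to control the single random choice made at each level of the recursion. For the base case $k=0$ I would note that the test $k\le|U|$ in line~\ref{line:guess_answer_u} always holds, so the procedure returns $U$ with $|U|\ge 0=k$ with probability $1\ge\ell^{0}$. For the inductive step I assume the statement for $k-1$ and fix a yes-instance $(G,p,k)$. If $k\le|U|$ the procedure again returns $U$ in line~\ref{line:guess_answer_u} with $|U|\ge k$, so the success probability is $1$; hence I may assume $k>|U|$. In this case I first observe that $P\neq\emptyset$, so that line~\ref{line:guess_answer_empty} is not triggered: otherwise $\mathcal{H}(G,p)=U$, and since every happy vertex is potentially happy the maximum number of simultaneously happy vertices would be at most $|U|<k$, contradicting that $(G,p,k)$ is a yes-instance. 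Thus the procedure reaches line~\ref{line:guess_answer_guess_v} and samples $v$ uniformly from $P$.

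The key step is to show that $v$ lands in the happy set of some optimal coloring with probability at least $1/\ell$. I would fix a coloring $c$ maximizing the number of happy vertices and set $H=\mathcal{H}(G,c)$; since $(G,p,k)$ is a yes-instance, $|H|\ge k$. By Lemma~\ref{lemma:precolored_high_prob}, $|H\cap P|\ge\frac{1}{\ell}|P|$, whence $\Pr[v\in H\cap P]\ge\frac{|H\cap P|}{|P|}\ge\frac{1}{\ell}$. Next I would argue that, conditioned on any fixed $v\in H\cap P$, the recursive instance $(G\setminus\{v\},p',k-1)$ is again a yes-instance of $\ell$-\MHV. Since $v$ is happy under $c$ and $v\in\mathcal{H}_i(G,p)$, the definition of $\mathcal{H}_i$ forces $c(v)=i$, and happiness of $v$ forces $c(u)=i$ for every $u\in N(v)$; this is exactly the color assigned to the neighbours of $v$ by the recolouring performed before the recursive call in line~\ref{line:guess_answer_induction}. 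Consequently $c$ restricted to $V(G)\setminus\{v\}$ extends $p'$, and deleting $v$ can only remove adjacency constraints, so every vertex of $H\setminus\{v\}$ remains happy; this yields at least $|H|-1\ge k-1$ happy vertices and witnesses that $(G\setminus\{v\},p',k-1)$ is a yes-instance.

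Finally I would combine the two probabilities. Conditioned on $v\in H\cap P$, the recursive call uses randomness independent of the choice of $v$ and, by the induction hypothesis applied to the yes-instance $(G\setminus\{v\},p',k-1)$, returns a set $H'$ with $|H'|\ge k-1$ with probability at least $\ell^{-(k-1)}$. By Claim~\ref{claim:mhv_randomized_correct} the returned set $H'\cup\{v\}$ consists of vertices that can be happy simultaneously, and since $v\notin V(G\setminus\{v\})\supseteq H'$ we get $|H'\cup\{v\}|=|H'|+1\ge k$. Summing over the eligible choices of $v$ gives
\[
\Pr[\text{output has size}\ge k]\ \ge\ \frac{|H\cap P|}{|P|}\cdot\ell^{-(k-1)}\ \ge\ \frac{1}{\ell}\cdot\ell^{-(k-1)}\ =\ \ell^{-k}.
\]
The main obstacle is the middle step: verifying that the recolouring producing $p'$ is compatible with the chosen optimal coloring $c$ (so that $c$ remains a legal extension of $p'$) and that deleting $v$ preserves $k-1$ happy vertices. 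Once this structural compatibility is in place, the probabilistic bookkeeping, the reduction to the case $P\neq\emptyset$, and the base case are all routine.
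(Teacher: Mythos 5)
Your proof is correct and follows essentially the same route as the paper's: induction on $k$, Lemma~\ref{lemma:precolored_high_prob} to get $\Pr[v\in\mathcal{H}(G,c)\cap P]\ge 1/\ell$ for a fixed optimal coloring $c$, and the observation that $c$ restricted to $V(G)\setminus\{v\}$ extends $p'$ so the recursive instance is a yes-instance. Your justification that $c(v)=i$ via the definition of $\mathcal{H}_i$ is in fact slightly more careful than the paper's phrasing (which assumes $v$ is precolored, though $v\in P$ need only have a precolored neighbour), but the argument is the same.
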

\begin{claimproof}
	The proof is by induction on $k$. For $k=0$, the claim statement holds, since any output suffices. Let now $k>0$ and let the claim statement hold true for $k-1$. Finally, let $(G,p,k)$ be a yes-instance of \MHV. Consider how the procedure processes the instance. If it returns in line \ref{line:guess_answer_u}, having that $|U|\ge k$, it outputs a required set with the probability of $1$. Note that the procedure can't return in line \ref{line:guess_answer_empty}, since $(G,p,k)$ is a yes-instance. Hence, the only case left is that the procedure returns in line \ref{line:guess_answer_induction}. We now consider this case.
	
	Fix any optimal coloring $c$ of $(G,p)$. In particular, $c$ yields at least $k$ happy vertices in $(G,p)$, so $|\mathcal{H}(G,c)|\ge k$. By Lemma \ref{lemma:precolored_high_prob}, $|\mathcal{H}(G,c) \cap P| \ge \frac{1}{\ell}\cdot |P|$. Hence, in line \ref{line:guess_answer_guess_v}, the procedure chooses $v$ such that $v \in \mathcal{H}(G, c)$, with a probability of at least $\ell^{-1}$. Consider the case that the procedure indeed chooses such $v$, so $v \in \mathcal{H}(G,c)$. Since $v$ is precolored and is happy with respect to $c$, for any neighbour $u$ of $v$ holds $c(u)=c(v)=p(v)$. Thus, $\left.c\right|_{V(G)\setminus\{v\}}$ is a coloring extending $p'$ in $G \setminus \{v\}$. Moreover, $\mathcal{H}(G\setminus \{v\}, \left.c\right|_{V(G)\setminus\{v\}})=\mathcal{H}(G,c)\setminus \{v\}$. Therefore, $(G\setminus\{v\}, p', k-1)$ is a yes-instance.  By induction, $\texttt{guess\_answer}(G\setminus\{v\}, p', k-1)$ returns a set $H'$ with $|H'|\ge k-1$ and all vertices in $H'$ can be happy simultaneously in $(G\setminus \{v\}, p')$, with a probability of at least $\ell^{-k+1}$. By Claim \ref{claim:mhv_randomized_correct}, all vertices in $H'\cup\{v\}$ can be happy in $(G,p)$ simultaneously. Recall that $v \in \mathcal{H}(G,c)$ with a probability of at least $\ell^{-1}$, and obtain the total probability of at least $\ell^{-k}$.
\end{claimproof}

From the claims above immediately follows that a single launch of the procedure finds that the given instance is a yes-instance with a probability of at least $\ell^{-k}$, and never finds that if the given instance is a no-instance. We finish the construction of the randomized algorithm by saying that it repeats the procedure for $\ell^k$ times for the given instance, so it recognizes a yes-instance with a constant probability of at least $e^{-1}$, and never recognizes a no-instance as a yes-instance.
\end{proofO}

Note that by Theorem \ref{thm:mhv_clique_eth}, no algorithm with running time $\O(\ell^{o(k)})$ exists for \MHV, unless ETH fails.
Similarly, no $\O(\ell^{o(k)})$ running time \emph{randomized} algorithm exists for \MHV~under the \emph{randomized} ETH \cite{Dell2014}.
The algorithm given above is optimal in that sence.

We now turn onto \MHE~and give an exact algorithm with $\O^*(2^k)$ running time for this problem.
In its turn, this algorithm optimal in a sence that no $2^{o(k)}$ running time algorithm exists for \MHE~under ETH (see Theorem \ref{thm:mhe_eth_lb}).
The algorithm relies on the following kernelization result.
We note that this kernelization result and an algorithm with the running time of $\O^*(2^k)$ was already presented by Aravind et al.\ in \cite{Aravind2017}.
We believe that our kernelization algorithm is short and somewhat simpler, since it relies on a single reduction rule.

\begin{theoremO}[\cite{Aravind2017}]\label{thm:mhe_k_uncolored}
	\MHEfull~admits a kernel with at most $k$ uncolored vertices.
\end{theoremO}
\begin{proofO}
	Let $(G, p, k)$ be an instance of \MHE. We show how to obtain an equivalent instance $(G', p', k')$ of \MHE, where the number of uncolored vertices in $(G', p')$ is at most $k'$ and $k' \le k$.
	
	The kernelization algorithm consists just of applying the following reduction rule exhaustively to $(G, p, k)$.
	
	\begin{rrule}\label{rule:mhe_k_vertices}
		If there is a connected component $C$ consisting only of uncolored vertices in $(G, p)$, remove it from $G$ and reduce $k$ by the number of edges in $C$. That is, replace instance $(G, p, k)$ with an instance $(G \setminus C, \left.p\right|_{V(G)\setminus C}, k-|E(G[C])|)$.
	\end{rrule}

	The correctness of the reduction rule follows from the fact that one can color a connected component of uncolored vertices with the same single color and make all edges in the component happy.

	\begin{claim}
		After the exhaustive application of Reduction rule \ref{rule:mhe_k_vertices}, if $G'$ contains at least $k'$ uncolored vertices, then $(G', p', k')$ is a yes-instance. 
	\end{claim}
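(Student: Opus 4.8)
The plan is to exhibit an explicit extension of $p'$ that makes at least one happy edge appear for each uncolored vertex. First I would record the structural consequence of exhaustiveness: since Reduction rule~\ref{rule:mhe_k_vertices} no longer applies to $(G',p',k')$, there is no connected component of $(G',p')$ consisting solely of uncolored vertices. Hence every connected component of $G'$ contains at least one precolored vertex, and in particular every uncolored vertex lies in a component reachable from some precolored vertex.

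Next I would build a spanning forest $F$ of $G'$ and, in each tree of $F$, select a precolored vertex as its root; such a root exists by the observation above. I would then extend $p'$ to a full coloring $c$ by processing the vertices of each tree in a top-down (BFS) order from the root: every precolored vertex keeps its color, and every uncolored vertex $v$ receives the color that has already been fixed for its parent in the tree. Since a parent is always processed strictly before its children, this assignment is well-defined, and for each uncolored vertex $v$ the tree edge joining $v$ to its parent has both endpoints of color $c(v)$ and is therefore happy under $c$.

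Finally I would count the happy edges produced. Because all roots are precolored, every uncolored vertex is a non-root vertex of its tree and thus owns a unique parent edge; distinct uncolored vertices own distinct parent edges. Consequently $c$ makes at least as many edges happy as there are uncolored vertices in $G'$, and by hypothesis this number is at least $k'$. Therefore $(G',p',k')$ is a yes-instance of \MHE. The only points that require care are precisely the two facts that hinge on exhaustiveness --- that every component admits a precolored root (so that every uncolored vertex genuinely has a parent edge to color toward) and that the counted happy edges are pairwise distinct --- so I would make sure both are stated cleanly rather than dwelling on the coloring procedure itself, which is routine.
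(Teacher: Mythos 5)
Your proof is correct and is essentially the paper's argument: the paper greedily colors, one at a time, an uncolored vertex that has an already-colored neighbour (which exists precisely because no component is all-uncolored), charging one new happy edge to each such vertex, and your spanning forest rooted at precolored vertices just fixes one valid processing order for that same greedy in advance. The forest formulation has the small advantage of making the injectivity of the vertex-to-happy-edge charging (via parent edges) and the availability of a colored neighbour at every step immediate, rather than re-verifying the no-all-uncolored-component invariant after each coloring step.
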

	\begin{claimproof}
		If $k' \le 0$, $(G', p', k')$ is trivially a yes-instance. Suppose now $k' \ge 1$ and $G'$ contains at least $k'$ uncolored vertices. We construct a coloring of $(G', p')$ that yields at least $k'$ happy edges.
		
		Take any uncolored vertex in $(G', p')$, say $v$, such that $v$ has at least one precolored neighbour in $(G',p')$. Note that such choice of $v$ always exists, otherwise Reduction rule \ref{rule:mhe_k_vertices} would be applied.
		
		Let $u$ be a precolored neighbour of $v$. Then, set the color of $v$ to $p(u)$. Edge $uv$ becomes happy, and still no connected component in $G'$ consists only of uncolored vertices. Thus, we can take an uncolored vertex with a precolored neighbour again. Repeat this procedure until no uncolored vertex remains in $G'$. The procedure is repeated at least $k'$ times, and each time a happy edge is obtained, so $(G', p', k')$ is a yes-instance.
	\end{claimproof}

	The statement of the theorem follows directly from the claim.
\end{proofO}

\begin{theoremO}[\cite{Aravind2017}]\label{cor:mhe_k_exact}
	There is a $\Ostar{2^k}$ running time algorithm for \MHEfull.
\end{theoremO}
\begin{proofO}
	As shown by Aravind et al.\ in \cite{Aravind2017}, \MHE~can be solved in time $\Ostar{2^{n'}}$, where $n'$ is the number of uncolored vertices. By Theorem \ref{thm:mhe_k_uncolored}, we can assume that $n' \le k$, and the statement follows.
\end{proofO}

	\bibliography{ref}
\end{document}